\documentclass[aps,prl,reprint,amsmath,amssymb,longbibliography]{revtex4-2}
\usepackage{graphicx,epsfig,epsf,color,hhline,import}
\usepackage{dcolumn}
\usepackage{bm}
\usepackage{tensor,pbox}
\usepackage{tikz}
\usepackage{epstopdf}
\usepackage{amsthm}
\usepackage{bbold}

\usepackage{changes}
\usepackage{cancel}
\usepackage{color}
\usepackage{url}
\usepackage[hidelinks]{hyperref}
\usepackage{ytableau}
\usepackage{enumerate}
\usepackage{mathdots}
\usepackage{braket}
\usepackage{MnSymbol}
\usepackage{mathtools}

\DeclarePairedDelimiter\ceil{\lceil}{\rceil}
\DeclarePairedDelimiter\floor{\lfloor}{\rfloor}

\newtheorem{theorem}{Theorem}

\newtheorem{lemma}[theorem]{Lemma}
\newtheorem{proposition}[theorem]{Proposition}
\newtheorem{definition}[theorem]{Definition}

\newcommand{\tr}{{\rm tr}}

\newcommand{\trb}[1]{\text{tr}\left[{#1}\right]}





\newcommand{\mc}[1]{\mathcal{#1}}

\newcommand{\mm}[1]{\mathrm{#1}}


\newcommand{\pij}[2]{p_{#1,#2}} 
\newcommand{\qij}[2]{q_{#1,#2}} 
\newcommand{\param}{\ensuremath{\alpha}} 
\newcommand{\dgen}{\ensuremath{h_\param}} 
\newcommand{\pnorm}[2]{\ensuremath{ \left\Vert  #1  \right\Vert_{#2} }}

\newcommand{\qfinkd}{\ensuremath{\mathop{}\! I}} 

\newcommand{\ii}{\ensuremath{ \mathrm{i\,} }}
\newcommand{\ens}[1]{\ensuremath{ {\{ #1 \}}} } 
\newcommand{\absv}[1]{\ensuremath{ \left\vert   #1 \right\vert }}  

\newcommand{\diff}[1][]{\mathop{}\!\mathrm{d_{#1}}}
\newcommand{\ee}{\mathrm{e}}

\definecolor{bostonuniversityred}{rgb}{0.8, 0.0, 0.0}

\begin{abstract}
	We study quantum metrology for unitary dynamics. Analytic solutions
	are given for both
	the optimal unitary state preparation starting
	from an arbitrary mixed state and the corresponding optimal
	measurement precision. This represents a rigorous generalization of
	known results for optimal initial states and upper bounds on
	measurement precision which can only be saturated if pure states are
	available. In particular, we provide a generalization to mixed
	states
	of an upper bound on measurement precision for time-dependent
	Hamiltonians that can
	be saturated with optimal
	Hamiltonian control. These results make precise and reveal the full potential of mixed states for quantum metrology.
\end{abstract}
\begin{document}
	\title{Maximal Quantum Fisher Information for Mixed States}
	\author{Lukas J.~Fiderer$^1$, Julien M.\,E. Fra\"isse$^2$, Daniel Braun$^1$}
	\affiliation{$^1$Eberhard-Karls-Universit\"at T\"ubingen, Institut f\"ur Theoretische Physik, 72076 T\"ubingen, Germany\\
		$^2$Seoul National University, Department of Physics and Astronomy, Center for Theoretical Physics, 151-747 Seoul, Korea}
	\maketitle

	The standard paradigm of quantum metrology involves the preparation of an initial state, a parameter-dependent dynamics, and a consecutive quantum measurement of the evolved state.
	From the measurement outcomes the parameter can be estimated
	\cite{giovannetti_quantum_2004,paris_quantum_2009,giovannetti_advances_2011}. Naturally,
	it is the goal to estimate the parameter as precisely as possible,
	i.e., to reduce the uncertainty
	$\Delta\hat{\alpha}=\text{Var}(\hat{\alpha})^{1/2}$ of the estimator
	$\hat{\alpha}$ 
	of the parameter $\alpha$ that
	we want to
	estimate. We consider single parameter estimation in the local regime where one already has a good estimate $\hat{\alpha}$  at hand (typically from prior measurements) such that this prior knowledge can be used to prepare and control consecutive measurements. Quantum coherence and non-classical correlations in quantum sensors help to reduce the uncertainty $\Delta\hat{\alpha}$ compared to what is possible with comparable classical resources \cite{pezze2018quantum,braun2018quantum}.
	The ultimate precision limit for unbiased estimators is given by the quantum Cram\'er--Rao bound $\Delta\hat{\alpha}\geq (M\qfinkd_\param)^{-1/2}$ which depends on the number of measurements $M$ and the quantum Fisher information (QFI) $\qfinkd_\param$ which is a function of the
	state  \cite{helstrom_quantum_1976,braunstein_statistical_1994}.
	When the number of measurements is fixed, as they correspond to a
	limited resource, precision is optimal and the QFI is maximal which
	involves an optimization with respect to the  
	state. 
	
	In this Letter, we consider a freely available state $\rho$, unitary freedom to prepare an initial state from $\rho$, and unitary parameter-dependent dynamics of the quantum system {(see Fig.~\ref{fig:protocol})}.
	The parameter-dependent dynamics will be called sensor dynamics in the following in order to distinguish it from the state preparation dynamics. For instance, in a spin system the unitary freedom can be used to squeeze the spin before it is subjected to the sensor dynamics, as it is the case in many quantum-enhanced measurements  \cite{fernholz2008spin,andre2002atom,leroux2010implementation,orzel2001squeezed}. In the worst case scenario, only the maximally mixed state is available, which does not change under unitary state preparation or unitary sensor dynamics and, thus, no information about the parameter can be gained. In the best-case scenario the available state is pure, when
	the
	maximal QFI as well as the optimal state to be prepared 
	are well-known  \cite{giovannetti_quantum_2006,fujiwara_fibre_2008}. 
	
	The appeal and advantage of the theoretical study of unitary sensor
	dynamics lies in the analytic solutions that can be found that allow
	fundamental insights in the limits of quantum metrology and the role
	of resources such as measurement time and system size. The QFI
	maximized with respect to initial states, also known as channel QFI,   
	can be reached only with pure initial states. If only mixed states are
	available, as it is usually the case under realistic conditions, this
	upper bound cannot be saturated and therefore has limited
	significance. 
	In fact, if pure states are not available, the question for the 
	maximal QFI and optimal state to be prepared is an
	important open problem 
	\cite{modi2011quantum,haine2015quantum}. 
	The main result of this Letter,
	theorem \ref{th:main} below, 
	is the complete solution of this
	problem.

	The  solution is relevant
	for practically all quantum sensors, as perfect initialization to a
	pure state can only be achieved to a certain degree that varies with
	the quantum system and the available technology. For example, nitrogen-vacancy (NV) center arrays  \cite{pham2011magnetic, barry2016optical} or
	atomic-vapor magnetometers  \cite{savukov2005effects,budker2007optical} operate with mixed initial states due to imperfect
	polarization and competing depolarization
	effects  \cite{appelt1998theory,choi2017depolarization}.
	Particularly
	relevant is the example of sensors based on nuclear spin ensembles that typically
	operate with nuclear spins in thermal equilibrium, such that at room
	temperature the available 
	state is strongly mixed  \cite{gershenfeld1997bulk}. 
	Hence, 
	the full 
	potential of quantum metrology is exploited only when
	the mixedness of initial states is taken into account
	\cite{jones2009magnetic,simmons2010magnetic,schaffry2010quantum,modi2011quantum}.
	
	\begin{figure}
		\centering
		\includegraphics[width=0.49\textwidth]{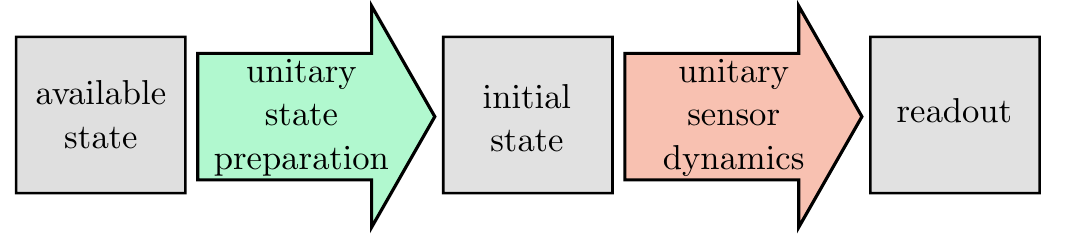}\\
		\caption{Schematic representation of the metrology protocol.}\label{fig:protocol} 
	\end{figure}
	We consider arbitrary, possibly time-dependent Hamiltonians
	$H_\alpha(t)$ for the sensor dynamics. The corresponding unitary
	evolution operator is $U_\param\coloneqq
	\mc{T}(\exp[-\ii/\hbar\int_0^T H_\alpha(t) \diff t])$, 
	where $\mc{T}$ denotes time-ordering, $T$ is the total time of the
	sensor dynamics, and  we set $\hbar=1$ in the following. In the
	simplest case, dynamics is generated by a ``phase-shift'' or
	``precession'' Hamiltonian proportional to the parameter $\alpha$,
	$H_\alpha=\alpha G$, with some parameter-independent operator $G$. The
	parameter dependence of the sensor dynamics is characterized by the
	generator 
	$\dgen\coloneqq \ii U_\alpha^\dagger\frac{\partial U_\alpha}{\partial \alpha}$,
	which simplifies to $G$ for phase-shift Hamiltonians  \cite{giovannetti_quantum_2006,boixo_generalized_2007,pang_quantum_2014,liu2015quantum}.
	
	By introducing the eigendecomposition of the prepared initial state $\rho={\sum_{k=1}^d} p_k\ket{\psi_k}\bra{\psi_k}$, where $d$ is the dimension of the Hilbert space,
	the QFI can be expressed as  \cite{braunstein_statistical_1994}, \cite{modi2011quantum}
	\begin{align}\label{eq:qfi}
	\qfinkd_\param (\rho)\coloneqq 2\sum_{k,\ell=1}^d \pij{k}{\ell}|\bra{\psi_k}\dgen\ket{\psi_\ell}|^2,
	\end{align} 
	with coefficients 
	\begin{align}\label{eq:pij}
	\pij{k}{\ell}\coloneqq \begin{cases*}
	0 \text{~\quad\qquad if }p_k=p_\ell=0,\\
	\frac{(p_k-p_\ell)^2}{p_k+p_\ell}\text{\quad else.}
	\end{cases*}
	\end{align}
	Also,  let $\mm{U}(d)$ denote the set of $d\times d$ unitary matrices.
	\begin{theorem}\label{th:main}
		For any state $\rho$ and any generator $\dgen$ with ordered eigenvalues $p_1\geq \dotsm \geq p_d$ and $h_1\geq\dotsm\geq h_d$, respectively, the maximal QFI with respect to all unitary state preparations $U\rho U^\dagger$, $U\in\emph{U}(d)$,
		is given by
		\begin{align}\label{eq:opt_QFI_th1}
		I_\alpha^*\coloneqq \max_{U} \qfinkd_\param (U\rho U^\dagger)=\frac{1}{2}\sum_{k=1}^{d}\pij{k}{d-k+1}(h_k-h_{d-k+1})^2.
		\end{align}
		Let $\ket{h_k}$ be the eigenvectors of the generator,  $\dgen\ket{h_k}=h_k\ket{h_k}$. The maximum $I_\alpha^*$ is obtained by preparing the initial state
		\begin{equation}\label{eq:opt_state}
		\rho^*=\sum_{k=1}^d p_k \ket{\phi_k}\bra{\phi_k}\,,
		\end{equation}
		with\footnote{More generally, the eigenvectors of $\rho^*$ may be written with a relative phase in the superpositions of $\ket{h_k}$ ($\ket{\mu_k(0)}$ in case of theorem \ref{th:time_dep}) such that the orthonormality condition remains fulfilled, i.e., $\frac{\ket{h_k}+e^{\ii\varphi_k}\ket{h_{d-k+1}}}{\sqrt{2}}\text{ if } 2k<d+1$ and $\frac{\ket{h_k}-e^{\ii\varphi_k}\ket{h_{d-k+1}}}{\sqrt{2}}\text{ if } 2k>d+1$, where the same applies for theorem \ref{th:time_dep} when replacing the $\ket{h_k}$ by $\ket{\mu_k(0)}$.}
		\begin{equation}\label{eq:opt_eigen_th1}
		\ket{\phi_k}=\begin{cases*}
		\frac{\ket{h_k}+\ket{h_{d-k+1}}}{\sqrt{2}}\quad ~~\text{if } 2k<d+1, \\
		\mathrlap{\ket{h_k}}\hphantom{\frac{\ket{h_k}+\ket{h_{d-k+1}}}{\sqrt{2}}}\quad ~~\text{if }2k=d+1,\\
		\frac{\ket{h_k}-\ket{h_{d-k+1}}}{\sqrt{2}}\quad ~~\text{if } 2k>d+1.
		\end{cases*}
		\end{equation}
	\end{theorem}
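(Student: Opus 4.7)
The plan is to split the proof into two parts: a direct verification that the state $\rho^*$ in (7) attains the value $I_\alpha^*$ of (6), and a matching upper bound over all unitary preparations. Part one is a short calculation; the bulk of the effort goes into part two.

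For achievability, I would substitute the eigenvectors (7) directly into (2). Writing $\dgen = \sum_j h_j\ket{h_j}\bra{h_j}$ and exploiting the orthonormality of $\{\ket{h_j}\}$, each $\bra{\phi_k}\dgen\ket{\phi_\ell}$ vanishes unless $\{k,\ell\}$ has the form $\{k,d{-}k{+}1\}$; the diagonal terms drop because $p_{k,k}=0$, while the surviving antidiagonal terms satisfy $|\bra{\phi_k}\dgen\ket{\phi_{d-k+1}}|^2 = (h_k - h_{d-k+1})^2/4$. Summing with weights $p_{k,d-k+1}$ over both orderings and using the symmetry $p_{k,\ell}=p_{\ell,k}$ reproduces (6).

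For the upper bound, I would parametrize a generic preparation as $\ket{\phi_k} = V\ket{h_k}$ with $V\in\mathrm{U}(d)$, so that $I = 2\sum_{k,\ell}p_{k,\ell}|M_{k\ell}|^2$ where $M = V^\dagger\dgen V$ is Hermitian and isospectral to $\dgen$. The problem becomes: maximize $\sum p_{k,\ell}|M_{k\ell}|^2$ over the unitary orbit of $\mathrm{diag}(h_1,\dots,h_d)$. In dimension two this is a direct $SU(2)$ calculation: $|M_{12}|^2 = (h_1-h_2)^2\sin^2(2\theta)/4$, maximized at $\theta=\pi/4$ to yield the claimed $p_{1,2}(h_1-h_2)^2$. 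For general $d$, my strategy is to impose stationarity of $I$ under Hermitian perturbations $V \to V\,e^{\ii\epsilon G}$ with $G$ supported on a single pair of indices $\{k,\ell\}$; these first-order conditions, taken jointly over all pairs, should force the optimal $M$ to be block-antidiagonal, that is, supported only on pairs $\{k,\sigma(k)\}$ for some involution $\sigma$ on $\{1,\dots,d\}$ (with the middle index as a fixed point when $d$ is odd), so that each $2\times 2$ block solves an independent copy of the dimension-two problem.

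The main obstacle is to identify the optimal involution $\sigma$ and the corresponding matching of $p$-eigenvalues to $h$-eigenvalues. Once the block structure is established, the total QFI reduces to $\tfrac{1}{2}\sum_{\text{pairs}}p_{a,b}(h_c-h_d)^2$, and the task becomes a combinatorial maximization over pairings of two sorted sequences. Since both weights $(x,y)\mapsto(x-y)^2/(x+y)$ and $(x,y)\mapsto(x-y)^2$ are symmetric and grow with the spread of their arguments, a rearrangement inequality in the Hardy--Littlewood--P\'olya spirit, adapted to paired sequences, should show that the extremal pairing is the anti-diagonal $\sigma(k) = d{-}k{+}1$. An alternative route that may streamline the bookkeeping exploits the identity $p_{k,\ell} = (p_k+p_\ell) - 4p_kp_\ell/(p_k+p_\ell)$ to decompose $I = 4\,\tr(\rho\dgen^2) - 8\sum_{k,\ell}\tfrac{p_kp_\ell}{p_k+p_\ell}|M_{k\ell}|^2$; von~Neumann's trace inequality controls the first term while the second succumbs to a Schur-convexity argument, and reconciling the combined bound with the achievability computation in part one closes the proof.
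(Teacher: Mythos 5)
Your achievability step is correct and is essentially the paper's: plugging the $\ket{\phi_k}$ of Eq.~\eqref{eq:opt_eigen_th1} into Eq.~\eqref{eq:qfi}, only the antidiagonal matrix elements survive and one recovers Eq.~\eqref{eq:opt_QFI_th1}. The gap is in the upper bound. Writing $M=V^\dagger\dgen V$ and $N_{k\ell}=p_{k,\ell}M_{k\ell}$, the first-order stationarity conditions you propose amount exactly to $[M,N]=0$, i.e.\ $M$ commutes with its Hadamard product with the weight matrix $(p_{k,\ell})$. This equation has a large set of solutions (any diagonal $M$ gives $N=0$, hence a critical point with $I=0$; permuted diagonal arrangements and various saddles also qualify), and it does not by itself force the maximizer to be supported on the pairs $\{k,\sigma(k)\}$ of an involution --- all the more so when some $p_k$ coincide, since then $p_{k,\ell}=0$ and the conditions degenerate further. ``Should force the optimal $M$ to be block-antidiagonal'' is precisely the statement that carries the whole difficulty of the theorem, and it is asserted rather than proved. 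Your fallback identity $I=4\,\tr(\rho\dgen^2)-8\sum_{k,\ell}\tfrac{p_kp_\ell}{p_k+p_\ell}|M_{k\ell}|^2$ is algebraically correct, but bounding the two terms separately cannot close the argument: the von Neumann maximizer of $\tr(\rho M^2)$ pairs $p_k$ with the \emph{sorted} values of $h_j^2$, which is not the basis $B^*$ attaining $I_\alpha^*$ (there one gets $\sum_k p_k(h_k^2+h_{d-k+1}^2)/2$, strictly smaller in general, e.g.\ for $d=2$, $h_1=2$, $h_2=0$, $p_1>1/2$). The two separately optimized bounds are attained at different bases, so their sum strictly overshoots $I_\alpha^*$ and cannot be ``reconciled'' with part one.

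The paper sidesteps the critical-point analysis entirely. It constructs (Proposition~1 of the Supplemental Material, by induction using inequalities such as $p_{i,l}\geq p_{i,j}+p_{j,l}$ for ordered indices) coefficients $q_{i,j}=\sum_{k=i}^{j-1}q_{k,k+1}\geq p_{i,j}$ with equality on the antidiagonal $j=d-i+1$. Replacing $p_{i,j}$ by $q_{i,j}$ gives a relaxed functional $J_\param(B)=4\sum_{k=1}^{d-1}q_{k,k+1}\pnorm{h_\param(B,k)}{2}^2$, a nonnegative combination of Hilbert--Schmidt norms of off-diagonal blocks; each block norm is bounded by the Bloomfield--Watson inequality, and all of these bounds are saturated \emph{simultaneously} by $B^*$, at which moreover $J_\param(B^*)=I_\param(B^*)$. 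The combinatorial content you defer to a rearrangement argument (which antidiagonal pairing of $p$'s with $h$'s wins) is encoded in the construction of the $q$'s. To salvage your route you would need a genuine proof of the block-antidiagonal structure of the global maximizer --- second-order or global information, not just stationarity --- or you could adopt the relaxation idea, which is where the real work lies.
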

	The proof is based on the Bloomfield--Watson inequality on
	the Hilbert--Schmidt norm of off-diagonal blocks of a Hermitian matrix  \cite{bloomfield1975inefficiency,drury2002some} and is
	given in the Supplemental Material \footnote{See Supplemental
		Material at [URL will be inserted by publisher] for proofs of
		theorem 1 and theorem 2, as well as the proofs of Heisenberg scaling for thermal states.}. 
	The idea of the proof is to construct an upper bound for the QFI in Eq.~\eqref{eq:opt_QFI_th1} that  exhibits a simpler dependence on the coefficients $\pij{k}{\ell}$. Then we maximize the upper bound by exploiting the Bloomfield--Watson inequality. The proof is concluded by showing that at its maximum the upper bound equals the QFI.

	It is important to notice that the rank $r$ of the state $\rho$ plays
	a crucial role both for the maximal QFI and for the optimal state:
	In order to reach the maximal QFI $I_\alpha^*$, the choice of the
	$\ket{\phi_k}$ corresponding to vanishing $p_k$, i.e., for $k> r$, is
	irrelevant. This is best exemplified by considering the well-known
	case of pure states,  characterized by $p_1=1$ and $r=1$
	\cite{giovannetti_quantum_2006,boixo_generalized_2007,pang_quantum_2014,fraisse_enhancing_2017,pang_optimal_2017}. Then,
	the maximal QFI in Eq.~\eqref{eq:opt_QFI_th1} simply becomes
	$(h_1-h_{d})^2$ and is obtained by preparing an equal superposition $
	(\ket{h_1}+\ket{h_d})/\sqrt{2}$ of the eigenvectors corresponding to
	the minimal and maximal eigenvalues of $\dgen$.  
	When the rank is increased but remains less than or equal to $(d+1)/2$,  the optimal QFI is equal 
	to $\sum_{i=1}^{r}p_i (h_i-h_{d-i+1})^2$. This can be seen as a convex sum of 
	pure-state QFIs \footnote{Yu showed that \textit{any} state $\rho$ can be decomposed as $\sum_k u_k\ket{U_k}\bra{U_k}$ with weights $u_k$ and 
	(generally non-orthonormal) pure states $\ket{U_k}$ such that the QFI equals 
	$\sum_k u_kI_\alpha(\ket{U_k}\bra{U_k})$ \cite{Toth2013Extremal, yu2013quantum}. For $\rho^*$ in Eq.~\eqref{eq:opt_state}, with rank $r\leq (d+1)/2$, Yu's state decomposition of $\rho^*$ is given by the eigendecomposition of $\rho^*$, $\ket{U_k}=\ket{\phi_k}$ and $u_k=p_k$. This is not the case for $r> (d+1)/2$.}.

	The situation changes when the rank is increased even further. For example with $r=4$ and $d=5$, the maximal QFI is equal to $p_1(h_1-h_5)^2+\frac{(p_2-p_4)^2}{p_2+p_4}(h_2-h_4)^2$. Further, for a Hilbert space of odd dimension, the vector $\ket{\phi_{(d+1)/2}}=\ket{h_{(d+1)/2}}$ is an eigenstate of the generator: It remains invariant under the dynamics and does not contribute to the QFI. For example for both $r=2$ and $r=3$ with $d=5$, the optimal QFI is given by $p_1(h_1-h_5)^2+p_2(h_2-h_4)^2$.

	We obtained $I_\alpha^*$ by optimizing with respect to unitary
	state preparation while keeping the sensor dynamics
	fixed {(see Fig.~\ref{fig:protocol})}. However, in practice it is often possible not only to 
	manipulate the available state but
	also the sensor dynamics by adding a parameter-independent
	control Hamiltonian
	$H_\mm{c}(t)$ to the original Hamiltonian $H_\param(t)$. 
	While theorem
	\ref{th:main} holds for any 
	{$H_\param(t)$},
	it is an interesting question
	to what extent the maximal QFI {in Eq.~\eqref{eq:opt_QFI_th1}} 
	can be
	increased  
	by adding a
	time-dependent control Hamiltonian.  Again, the
	answer is only known for pure states
	\cite{pang_optimal_2017}. The question, how this generalizes
	if the available state is mixed, brings us to 
	\begin{theorem}\label{th:time_dep}
		For any state $\rho$ with ordered eigenvalues $p_1\geq \dotsm \geq p_d$ and any time-dependent Hamiltonian $H_\alpha(t)$, where $\mu_1(t)\geq\dotsm\geq \mu_d(t)$ are the ordered eigenvalues of $\partial_\alpha H_\alpha(t)\coloneqq\partial H_\alpha(t)/\partial \alpha$, an upper bound for the QFI is given by
		\begin{align}\label{eq:opt_QFI_th2}
		K_\alpha=\frac{1}{2}\sum_{k=1}^{d}p_{k,d-k+1} \left(\int_{0}^{T}[\mu_k(t)-\mu_{d-k+1}(t)]\emph{d}t\right)^2.
		\end{align}
		Let $\ket{\mu_k(t)}$ be the time-dependent eigenvectors of $\partial_\alpha H_\alpha(t)$, $\partial_\alpha H_\alpha(t)\ket{\mu_k(t)}=\mu_k(t)\ket{\mu_k(t)}$. The upper bound $K_\alpha$ is reached by preparing the initial state
		\begin{equation}\label{eq:opt_state_2}
		\rho^*=\sum_{k=1}^d p_k \ket{\phi_k}\bra{\phi_k},
		\end{equation}
		with
		\begin{equation}\label{eq:opt_eigen_th2}
		\ket{\phi_k}=\begin{cases*}
		\frac{\ket{\mu_k(0)}+\ket{\mu_{d-k+1}(0)}}{\sqrt{2}}\quad ~~\text{if } 2k<d+1, \\
		\mathrlap{\ket{\mu_k(0)}}\hphantom{\frac{\ket{\mu_k(0)}+\ket{\mu_{d-k+1}(0)}}{\sqrt{2}}}\quad ~~\text{if }2k=d+1,\\
		\frac{\ket{\mu_k(0)}-\ket{\mu_{d-k+1}(0)}}{\sqrt{2}}\quad ~~\text{if } 2k>d+1,
		\end{cases*}
		\end{equation}
		and choosing the Hamiltonian control $H_\mm{c}(t)$ such that 
		\begin{equation}\label{eq:opt_con_th2}
		U_\alpha(t)\ket{\mu_k(0)}=\ket{\mu_k(t)}\quad \forall k=1,\dotsc,d~~\forall t,
		\end{equation}
		where
		\begin{equation}\label{eq:unitarycon}
		U_\alpha(t)= \mc{T}\left[\exp\left(-\ii\int_0^t [H_\alpha(\tau)+H_\mm{c}(\tau)] \diff \tau\right)\right].
		\end{equation}
	\end{theorem}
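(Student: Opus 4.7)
The plan is to reduce Theorem~\ref{th:time_dep} to an application of Theorem~\ref{th:main} via the effective generator of the total dynamics. Starting from the standard identity
\begin{equation*}
h_\alpha = \ii U_\alpha^\dagger(T)\,\partial_\alpha U_\alpha(T) = \int_0^T U_\alpha^\dagger(t)\,\partial_\alpha H_\alpha(t)\,U_\alpha(t)\,\diff t,
\end{equation*}
which holds for every choice of $H_\mm{c}(t)$ entering Eq.~\eqref{eq:unitarycon}, let $\tilde h_1\geq\dotsm\geq\tilde h_d$ denote the ordered eigenvalues of $h_\alpha$ (they depend on the control). Theorem~\ref{th:main} then immediately gives $I_\alpha\leq\tfrac{1}{2}\sum_k p_{k,d-k+1}(\tilde h_k-\tilde h_{d-k+1})^2$ after optimizing over the unitary state preparation. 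It remains to bound this by $K_\alpha$ for every control and to exhibit a control plus initial state that saturates the bound.

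For the bound, I would invoke Ky~Fan's inequality: since $U_\alpha^\dagger(t)\partial_\alpha H_\alpha(t)U_\alpha(t)$ shares the ordered eigenvalues $\mu_k(t)$ of $\partial_\alpha H_\alpha(t)$, one has $\sum_{k=1}^{m}\tilde h_k\leq\int_0^T\sum_{k=1}^{m}\mu_k(t)\diff t=\sum_{k=1}^{m}\bar\mu_k$ for every $m$, with trace equality at $m=d$, where $\bar\mu_k\coloneqq\int_0^T\mu_k(t)\diff t$. This is exactly the majorization $(\tilde h_1,\dotsc,\tilde h_d)\prec(\bar\mu_1,\dotsc,\bar\mu_d)$. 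It therefore suffices to show that $F(a)\coloneqq\tfrac{1}{2}\sum_k p_{k,d-k+1}(a_k-a_{d-k+1})^2$ is Schur-convex on the ordered cone $a_1\geq\dotsm\geq a_d$. A short computation gives $\partial_i F = 2\,p_{i,d-i+1}(a_i-a_{d-i+1})$, so Schur's criterion reduces to proving that $p_{i,d-i+1}(a_i-a_{d-i+1})$ is nonincreasing in $i$ for $i\leq d/2$: the factor $(a_i-a_{d-i+1})$ is manifestly nonincreasing on this range, and the weight $p_{k,d-k+1}=(p_k-p_{d-k+1})^2/(p_k+p_{d-k+1})$ is also nonincreasing in $k$ because $(p-q)^2/(p+q)$ is nondecreasing in $p$ and nonincreasing in $q$ on $p\geq q\geq 0$, while $p_k$ decreases and $p_{d-k+1}$ increases. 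The symmetry $p_{k,\ell}=p_{\ell,k}$ handles the mirror range $i>d/2$, and the mixed case $i\leq d/2<j$ is immediate from the sign of $\partial F$. Schur-convexity then yields $F(\tilde h)\leq F(\bar\mu)=K_\alpha$.

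Saturation comes from an explicit construction. Define the smooth unitary $V(t)$ by $V(t)\ket{\mu_k(0)}=\ket{\mu_k(t)}$ and pick the control $H_\mm{c}(t)=\ii\dot V(t)V^\dagger(t)-H_\alpha(t)$, so that $U_\alpha(t)=V(t)$ and Eq.~\eqref{eq:opt_con_th2} holds. Then $U_\alpha^\dagger(t)\partial_\alpha H_\alpha(t)U_\alpha(t)=\sum_k\mu_k(t)\ddens{\mu_k(0)}$ and integration gives $h_\alpha=\sum_k\bar\mu_k\ddens{\mu_k(0)}$, already diagonal with the $\bar\mu_k$ in decreasing order. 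Applying Theorem~\ref{th:main} to this generator produces exactly the initial state~\eqref{eq:opt_eigen_th2} and QFI equal to $K_\alpha$.

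The main obstacle is the Schur-convexity step. Because $F$ pairs index $k$ with $d-k+1$ rather than being symmetric in its arguments, off-the-shelf Schur-convexity theorems do not apply directly, and the argument relies on the specific monotonicity of the QFI coefficients $p_{k,d-k+1}$ along the chosen pairing—a feature not shared by general weighted squared differences.
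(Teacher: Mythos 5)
Your proposal is correct, and it reaches the upper bound by a genuinely different route than the paper. The paper discretizes $h_\alpha=\int_0^T U_\alpha^\dagger(t)\partial_\alpha H_\alpha(t)U_\alpha(t)\diff t$ into a Riemann sum and iterates a two-matrix lemma for $C=A+B$: Ky Fan's inequalities give \emph{weak} majorization of the spread vector $\boldsymbol{d}(C)\prec_w\boldsymbol{d}(A)+\boldsymbol{d}(B)$ (where $\boldsymbol{d}(X)$ collects the differences $x_i-x_{d-i+1}$), and this is combined with the fact that $\boldsymbol{x}\mapsto\sum_i p_{i,d-i+1}\,x_{[i]}^2$ is an \emph{increasing, symmetric} Schur-convex function of $\lceil d/2\rceil$ variables, so the off-the-shelf Marshall--Olkin theorem applies; the result is then obtained by repeated bipartition in the limit. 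You instead keep the full $d$-dimensional spectra, use the integral form of Ky Fan to get the \emph{full} majorization $\tilde{\boldsymbol{h}}\prec\bar{\boldsymbol{\mu}}$ (trace equality is essential here, since your $F$ is not monotone, so weak majorization would not suffice), and verify Schur convexity of the non-symmetric paired functional $F$ directly on the ordered cone via the decreasing-gradient criterion $\partial_1F\geq\dotsm\geq\partial_dF$. This trades the paper's discretization-and-iteration argument for a hand-verification of the Schur condition; notably, both routes hinge on the same nontrivial fact that the central coefficients $p_{k,d-k+1}$ are nonincreasing in $k$ up to the middle index --- the paper derives this from its lemma~1$(i)$ applied twice, while your monotonicity argument for $(p-q)^2/(p+q)$ in each argument separately is a clean alternative proof of the same fact. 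For saturation, you diagonalize $h_\alpha$ under the optimal control and feed it back into theorem~1, whereas the paper computes the QFI of the proposed state explicitly; these are equivalent, though you should note (as the paper also tacitly assumes) that the construction of $V(t)$ and of $H_\mm{c}(t)=\ii\dot V V^\dagger-H_\alpha$ presupposes a differentiable choice of the eigenvector frame $\ket{\mu_k(t)}$.
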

	The proof
	(see Supplementary Material \cite{Note2}) 
	starts by rewriting
	$h_\alpha$ as in Ref.~\cite[Eq.~6]{pang_optimal_2017} and	shows that Eq.~\eqref{eq:opt_QFI_th2} is an upper bound
	for Eq.~\eqref{eq:opt_QFI_th1}. We use the Schur convexity
	\cite{marshall1979inequalities} of Eq.~\eqref{eq:opt_QFI_th1} and the
	inequalities from K.~Fan \cite{fan1949theorem,fulton2000eigenvalues} for eigenvalues of the sum of 
	two hermitian matrices.
	
	
	One of the strengths of the bound $K_\alpha$ is that it is given by the eigenvalues of $\partial_\alpha H_\alpha(t)$ and does not depend on the full unitary operator of the sensor dynamics which is hard to calculate for time-dependent Hamiltonians.
	\begin{figure}
		\centering
		\includegraphics[width=0.4\textwidth]{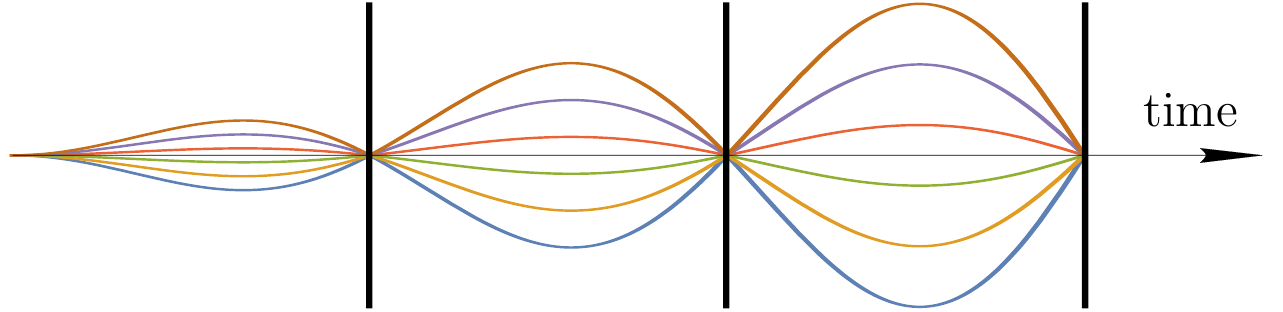}\\
		\caption{Exemplary sketch of time-dependent eigenvalues $\mu_1\geq \dotsm \geq \mu_6$ of
			$\partial H(t)/\partial \omega$ corresponding to $f(t)=\cos(\omega t)$. Vertical black lines indicate the position of single-spin $\pi$-pulses about the $x$-axis in order to interchange eigenvectors $\ket{j,m}\leftrightarrow\ket{j,-m}$.}\label{fig:eigen} 
	\end{figure}
	The optimal initial state with Hamiltonian control in theorem
	\ref{th:time_dep} differs from the optimal initial state without
	Hamiltonian control in theorem \ref{th:main}
	by the fact that the eigenvectors of the
	generator $\dgen$ in Eq.~\eqref{eq:opt_eigen_th1} are replaced by
	those of $\partial_\alpha H_\alpha(0)$ in
	Eq.~\eqref{eq:opt_eigen_th2}. The reason for this is that the optimal
	initial state of theorem 1 is the most sensitive state with respect to
	the sensor dynamics $U_\alpha$. However, if the Hamiltonian is
	time-dependent, the state which is most sensitive to the sensor
	dynamics at time $t$ will also be time-dependent in general. Since the
	Hamiltonian control is allowed to be time-dependent, we can take this
	into account and ensure that the optimal initial state evolves such
	that it is most sensitive to the sensor dynamics for all times
	$t$. This corresponds to the condition in
	Eq.~\eqref{eq:opt_con_th2}. Only in special cases, such as
	phase-shift Hamiltonians $H_\alpha=\alpha G$, we have
	$\dgen=\partial_\alpha H_\alpha$ and, thus, the optimal initial states
	of theorem \ref{th:main} and \ref{th:time_dep} are the same. If they
	are not the same, a
	Hamiltonian $H_\alpha$
	can be seen as suboptimal and
	requires correction by means of the Hamiltonian control in order to
	reach the upper bound of theorem \ref{th:time_dep}.
	
	Formally, the optimal control Hamiltonian from theorem \ref{th:time_dep} depends on the (unknown) parameter $\alpha$. Since we are in the local parameter estimation regime, we have knowledge (from prior measurements) about $\alpha$ such that $\alpha$ can be replaced by the estimate $\hat{\alpha}$. It was shown that replacing $\alpha$ by $\hat{\alpha}$ in the optimal control Hamiltonian does not ruin the benefits from introducing Hamiltonian control \cite{pang_optimal_2017}, and Hamiltonian control was applied experimentally with great success in Ref.~\cite{Schmitt832}.
	For a more detailed discussion of control Hamiltonians we 
	refer to the work of Pang et al.~\cite{pang_optimal_2017}
	\footnote{The optimal control
			Hamiltonian of Pang et
			al.~\cite{pang_optimal_2017} fulfills exactly the condition in
			Eq.~\eqref{eq:opt_con_th2} which makes it optimal not only for pure
			but also for mixed states. For pure states, control
		Hamiltonians that ensure $U_\alpha(t)\ket{\mu_i(0)}=\ket{\mu_i(t)}$
		only for $i=1,d$ would have been sufficient because only $\mu_1$ and
		$\mu_d$ contribute to the upper bound.}.
	\begin{figure*}
		\includegraphics[width=0.312\textwidth]{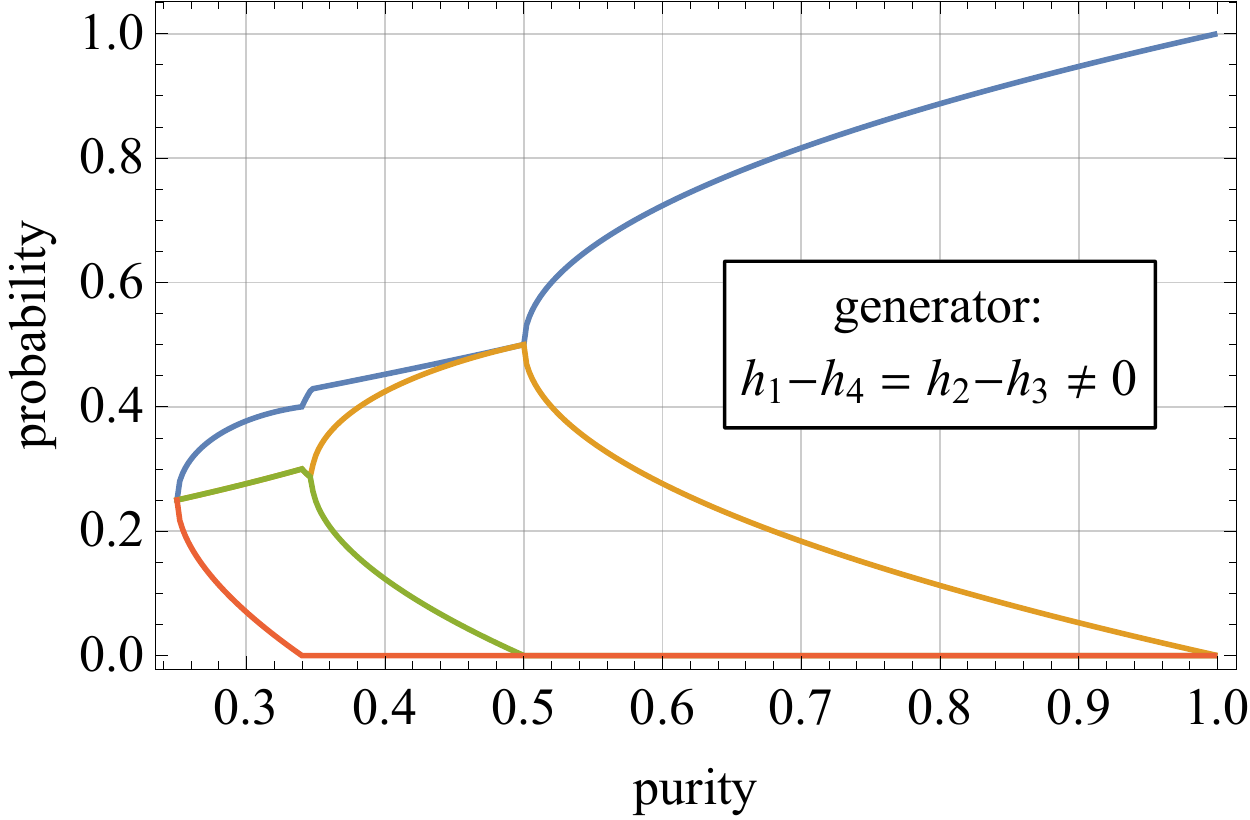}
		\includegraphics[width=0.294\textwidth]{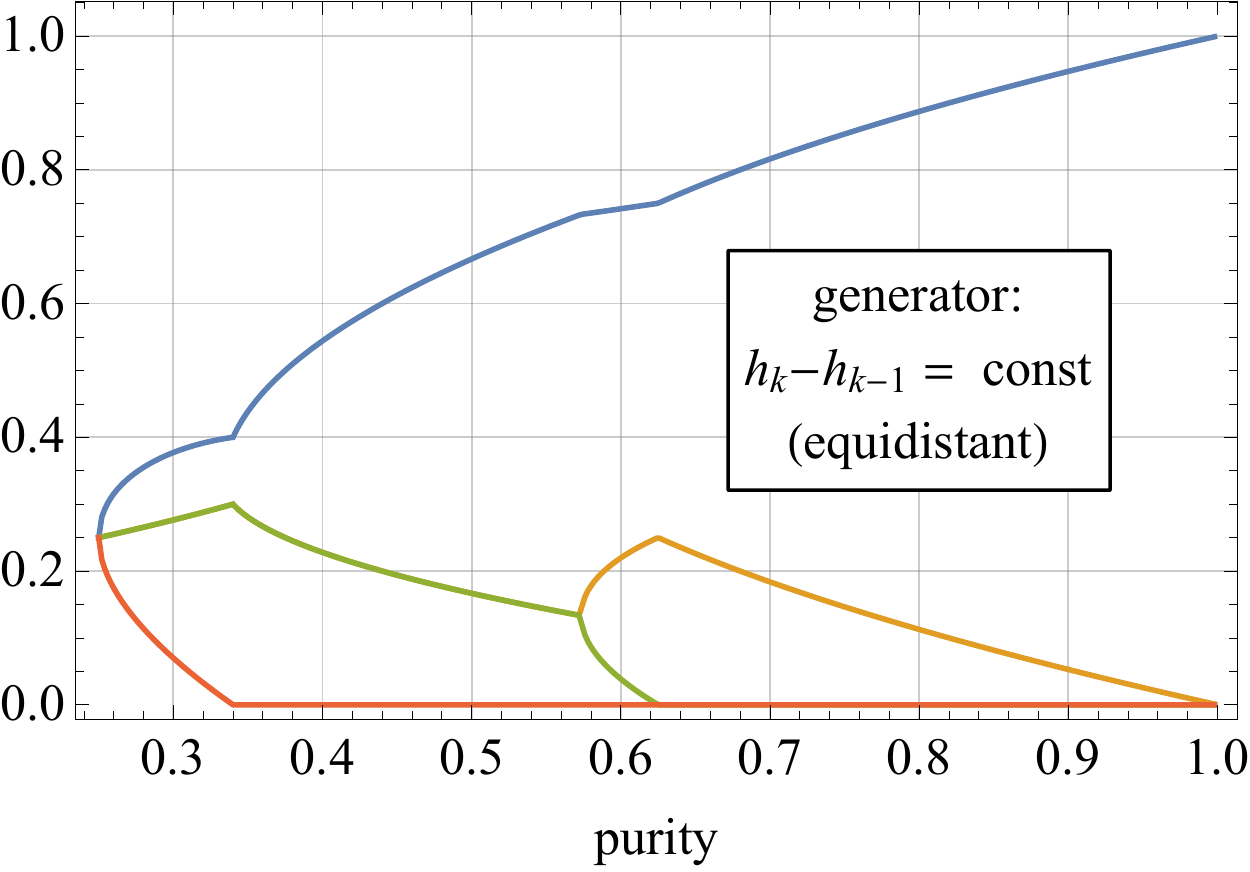}
		\includegraphics[width=0.353\textwidth]{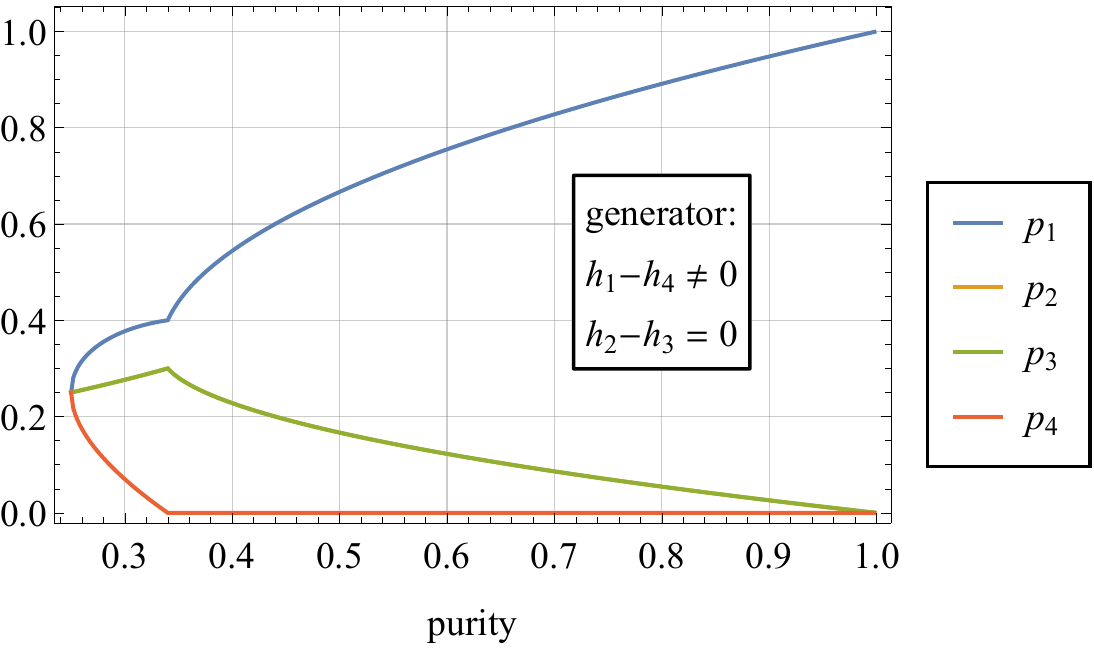}\\
		(a)\hspace{5cm}(b)\hspace{5cm}(c)\hspace{0.5cm}
		\caption{Eigenvalues $p_1\geq \dotsm\geq p_4$ of initial
			two-qubit states that maximize the QFI for different values
			of purity $\gamma$.  For each value of purity, eigenvalues
			$p_i$ are found numerically by maximizing the expression for
			maximal QFI from theorem \ref{th:main} in
			Eq.~\eqref{eq:opt_QFI_th1} under the constraints of fixed
			purity and conservation of probability,
			$\sum_kp_k=1$. Different panels correspond to different
			spectra of the generator with eigenvalues $h_1\geq
			\dotsm\geq h_4$ as indicated in the insets. 
			The generator used in panel (a) has two degeneracies,
			the one in panel (b) has an equidistant spectrum, and
			the one in panel (c) has one degeneracy. In panel (c),
			the line corresponding to $p_3$ overlays the line of $p_2$.}\label{fig:purity}
	\end{figure*}
	
	As applications of our theorems we
consider two
	examples: the  
	estimation of a magnetic field amplitude and the 
	estimation of the frequency of an oscillating magnetic
	field. Both cases  can be described with the general
	Hamiltonian of a system of $N$ spin-$j$ particles subjected to a (time-dependent) magnetic field
	\begin{align}\label{eq:hamilton_n_spin}
	H(t)=\sum_{k=1}^N B f(t)S_z^{(k)}+H_\text{I},
	\end{align}
	with the magnetic field amplitude $B$, some time-dependent real-valued
	modulation function $f(t)$, and spin operator $S_z^{(k)}$ in
	$z$-direction of the $k$th spin. We use the standard angular momentum
	algebra, $S_z^{(k)}\ket{j,m}=m\ket{j,m}$ with
	$m=-j,\dotsc,j$. $H_\text{I}$ is independent of $B$ and takes into
	account possible interactions between spins.  
	This rather general Hamiltonian can be seen as an
	idealization of quantum sensors based on arrays of NV centers \cite{pham2011magnetic, barry2016optical,choi2017quantum}, nuclear
	spin ensembles \cite{de2008nmr}, or vapor of alkali
	atoms \cite{budker2007optical}. 
	Due to imperfect
	polarization and competing depolarization
	effects \cite{scheuer2017robust,appelt1998theory,choi2017depolarization, pagliero_recursive_2014}, the available
	states are mixed.

	Here, we consider the available state of each of the $N$ spins to be
	described by a spin-temperature distribution {(independent of the Hamiltonian in Eq.~\eqref{eq:hamilton_n_spin})}
	\begin{align}
	\rho_\text{th}=\frac{\text{e}^{\beta S_z}}{Z}\,,\label{eq:single_spin}
	\end{align}
	with partition function $Z=\sum_{m=-j}^{j}\text{e}^{\beta m}$, and
	inverse (effective) 
	temperature $\beta$. Eq.~\eqref{eq:single_spin} was derived for {optically polarized} 
	alkali vapors in \cite[Eq.~112]{appelt1998theory},  and we assume that it
	is also a good approximation for the other
	spin-based magnetometers mentioned.
	$\beta$ is related to the degree of polarization $P\in[0,1]$ by $\beta=\ln\frac{1+P}{1-P}$;
	$P=1$ corresponds to a perfectly polarized
	spin in $z$-direction,  described by a pure state, and $P=0$
	corresponds to an unpolarized spin, 
	i.e.~a maximally mixed
	state. 
	The available
	state of the total system is a tensor product of
	spin-temperature distributions,
	$\rho=\rho_\text{th}^{\otimes N}$. 
	
	For the estimation of the amplitude $B$ we assume that the modulation $f(t)$ is known (the case of unknown $f(t)$ would correspond to waveform estimation \cite{tsang2011fundamental, berry2015quantum}). This is naturally the case for (quasi-)constant magnetic fields,
		periodic fields of known frequency, or, for example, when the modulation originates from a relative movement of sensor and environment (the source of $B$) that is tracked separately with another sensor.
	The maximal QFI obtained by using control Hamiltonians (cf.~theorem \ref{th:time_dep}) for estimating the amplitude $B$ is found to be
	\begin{align}
	K_B=g^2(T)\sum_{k=-Nj}^{Nj}q(k)\frac{\sinh^2(\beta k)}{Z^N\cosh(\beta k)}(2k)^2,\label{eq:qfi_opt_spins}
	\end{align}
	where $q(k)$ takes into account the degeneracy of eigenvalues of
	$\rho$ and $\partial_B H(t)\coloneqq\partial H(t)/\partial
	B$. It follows from the definition of the tensor product that the
	degeneracy of the $k$th eigenvalue of both, $\rho$ and $\partial_B
	H(t)$, where eigenvalues are in weakly decreasing order, equals
	the number of possibilities $q(k)$ of getting a sum $k$ when rolling
	$N$ fair dice,
	each having $2j+1$ sides corresponding to values $\{-j,\dotsc, j\}$ (see Supplementary Material \cite{Note2}) \cite[p.~23-24]{uspensky1937introduction}:
	\begin{align}
	q(k)\coloneqq \sum_{\ell=0}^{N} (-1)^{\ell}\binom{N}{\ell}\binom{k+N(j+1)-1-\ell(2j+1)}{N-1},
	\end{align}
	where the binomial coefficient $\binom{a}{b}$ is set to zero if one or both of its coefficients are negative.
	The dependence on measurement time $T$ is given by $g(T)=\int_{0}^{T}|f(t)|
	\text{d}t$.
	
	The QFI in Eq.~\eqref{eq:qfi_opt_spins} exhibits a complicated dependence
	on 
	the number of thermal states $N$ and their spin size $j$. However,
	by deriving a lower bound for Eq.~\eqref{eq:qfi_opt_spins}, we 
	prove
	that the QFI scales $\propto N^2$ for any $j$ as well as $\propto j^2$ for any $N$. {In particular, we find
		$K_B= 4N^2\braket{S_z}^2+\mathcal{O}(N)$ where $\braket{S_z}=\trb{\rho_\text{th} S_z}$, and $\mathcal{O}(N)$  denotes terms of order $N$ and lower order. In the limit of large temperatures, $\braket{S_z}^2$ decays as $\beta^2$ (see Supplementary Material \cite{Note2}).}
	
	This means that Heisenberg scaling {\cite{giovannetti_quantum_2004,holland_interferometric_1993,zwierz_general_2010},} i.e., the quadratic scaling with the system size $j$ or the number of particles $N$, is obtained for the optimal unitary state preparation even if only thermal states are available. Note that this also holds in the context of theorem \ref{th:main} if the generator equals $S_z$. {Importantly,} Heisenberg scaling is found for any finite temperature of the thermal state; only in the limit of infinite temperature, the available state is fully mixed and the QFI vanishes.
	
	In order to attain the QFI \eqref{eq:qfi_opt_spins}, the conditions \eqref{eq:opt_con_th2} must be fulfilled. In particular the Hamiltonian control must cancel interactions between the spins, i.e., $H_\text{I}$
	must be compensated.
	Also, every time the modulation function $f(t)$ changes its sign, we
	must apply a transformation which interchanges the eigenstates
	corresponding to a (degenerate) eigenvalue $e^{\beta k}/Z^N$ of $\rho$
	with the eigenstates corresponding to the (degenerate) eigenvalue
	$e^{-\beta k}/Z^N$ for all $k=1,\dotsc, Nj$. This is realized, for
	instance, with a local $\pi$-pulse about the $x$-axis, which
	interchanges $\ket{j,m}$ and $\ket{j,-m}$ for every single spin.  
	The $\pi$-pulses ensure optimal phase accumulation of the
	optimal state given by Eq.~\eqref{eq:opt_state_2} (cf.~Fig.~\ref{fig:eigen}).
	
	The degeneracy of eigenvalues of $\rho$ and $\partial_B H(t)$
	leads to a freedom in preparing the optimal initial state. 
	The special case of qubits, $j=1/2$, constant magnetic
	field, $f(t)=1$, and no interactions, $H_\text{I}=0$, was studied by
	Modi et al.~\cite{modi2011quantum}. 
	In this case, no
	Hamiltonian control is required which 
	brings us back to
	theorem \ref{th:main}. They conjectured that a unitary state
	preparation consisting of a mixture of GHZ states is optimal in
	their case and calculated the QFI. Theorem 1 confirms their conjecture.

	If, instead of the amplitude, we want to estimate the frequency
	$\omega$ of a periodic magnetic field with known amplitude $B$,
	$f(t)=\cos(\omega t)$, the eigenvalues of
	$\partial H(t)/\partial \omega$ are modulated not with $f(t)$ but with $\partial f(t)/\partial\omega=-t\sin(\omega t)$, see Fig.~\ref{fig:eigen}. The maximal QFI $K_\omega$ equals Eq.~\eqref{eq:qfi_opt_spins} with the only difference that  $g(T)$ is replaced by
	$g_\omega(T)=\int_{0}^{T}Bt|\sin(\omega t)|\text{d}t\simeq BT^2/\pi$,
	corresponding to a $T^4$-scaling of QFI, similar to what was reported in
	Ref.~\cite{pang_optimal_2017}. The optimal control is similar to the
	estimation of $B$: interactions must be canceled and local
	$\pi$-pulses about the $x$-axis must be applied whenever $\partial f(t)/\partial\omega$ crosses zero.
	
	Theorem 1 also allows us to study the problem of optimal initial
	states of given purity $\gamma=\tr \rho^2$. Fixing
	only $\gamma$ amounts 
	to an additional optimization over the spectrum of the initial state, 
	which we solve numerically. As an example, we consider a two-qubit
	system with eigenvalues $p_1\geq \dotsm\geq p_4$, see Fig.~\ref{fig:purity}. 
	We observe that different levels of degeneracy of the spectrum of the generator results in distinct solutions for the optimal eigenvalues $p_k$.

	In conclusion, theorems \ref{th:main} and \ref{th:time_dep} give an
	answer to the question of optimal unitary state preparation  and
	optimal Hamiltonian control for an available mixed state and given
	unitary sensor dynamics that encodes the parameter to be measured in
	the quantum state. In comparison, distilling pure from mixed states at the cost of reducing the number of available probes would be an alternative. However, probes are typically a valuable resource, that is utilized most efficiently along the lines of theorem \ref{th:main} and \ref{th:time_dep}. The two theorems 
	allow one to
	study quantum metrology with mixed states with the same analytical rigor as for
	pure states, and the well-known results about optimal pure states are
	recovered as special cases.
	We find that Heisenberg scaling of the QFI 
	can be reached with thermal states: 
	initial mixedness is not as detrimental as Markovian decoherence during or after the sensor dynamics, which is known to generally destroy the Heisenberg scaling of the QFI \cite{Huelga97,kolodynski2014precision,escher_general_2011}.
	
\begin{acknowledgments}
	L.\,J.\,F.~and D.\,B.~acknowledge
support from the Deutsche Forschungsgemeinschaft (DFG), Grant No.~BR
5221/1-1. J.\,M.\,E.\,F.~was supported by the NRF of Korea, Grant No.~2017R1A2A2A05001422.
\end{acknowledgments}

%

\clearpage
\onecolumngrid
\appendix
\section*{Supplemental Material}
\section{Proof of theorem 1}
\begin{figure}[h!]
	\centering
	\includegraphics[width=0.8\linewidth]{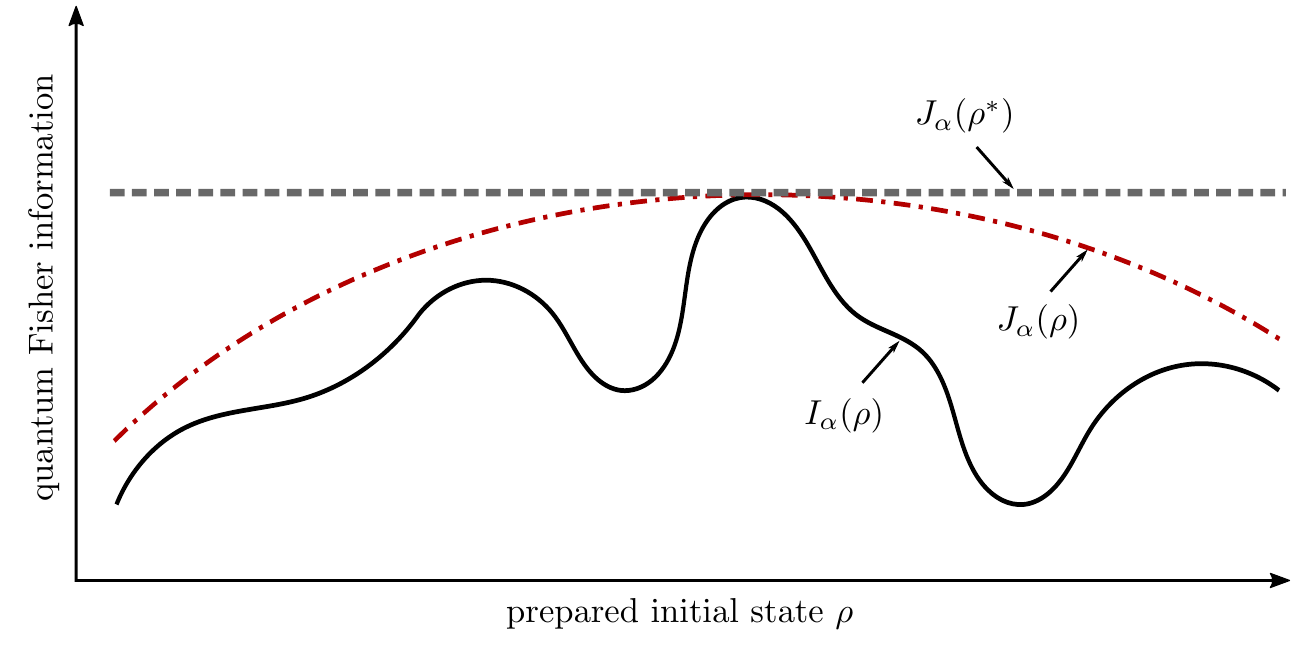}
	\caption{Schematic sketch of the mechanism used to prove theorem 1 from the Letter. First an upper bound $J_\param (\rho)$ (red dash-dotted line) for the quantum Fisher information $I_\param (\rho)$ (black solid line) is constructed. This upper bound is shown to be maximal for $\rho=\rho^*$ (gray dashed line). Then, it is shown that $I_\param (\rho^*)=J_\param (\rho^*)$ from which it follows that $I_\param (\rho^*)$ must be the maximum of $I_\param (\rho)$.}\label{fig:proof_schematic}
\end{figure}
The idea of the proof of theorem 1 from the Letter is the following, see also Fig.~\ref{fig:proof_schematic}:
We carefully construct an upper bound $J_\param (\rho) \geq I_\param (\rho)$ for the QFI $I_\alpha(\rho)$. Then, we show that $(i)$ $J_\param (\rho)$ is maximized by setting $\rho=\rho^*$ and $(ii)$ $J_\param (\rho^*)=I_\param (\rho^*)$. It follows that $I_\param (\rho^*)$ is the maximum of $I_\param (\rho)$.

We first give a technical lemma which introduces inequalities for the $p_{i,j}$ coefficients which are defined as
\begin{align}\label{eq:pij_sup}
\pij{i}{j}\coloneqq\begin{cases*}
0 \text{~\quad\qquad if }p_i=p_j=0,\\
\frac{(p_i-p_j)^2}{p_i+p_j}\text{\quad else.}
\end{cases*}
\end{align}
These inequalities will be used to prove proposition \ref{prop:1} about the existence of coefficients $q_{i,j}$ which fulfill specific conditions. Proposition \ref{prop:1} enables us to find the desired upper bound $J_\param (\rho)$ for the QFI $I_\param (\rho)$. This is then used in the proof of theorem \ref{th:main_sup} which corresponds to theorem 1 from the Letter. 

To facilitate the understanding of the following lemma and proposition, we introduce a schematic arrangement of a set of coefficients $\pij{i}{j}$, see Fig.~\ref{fig:p1}. We consider only coefficients with $1\leq i<j\leq d$ because of the symmetry $\pij{i}{j}=\pij{j}{i}$ and because $\pij{i}{i}=0$.
\begin{figure}[h!]
	\centering
	\includegraphics[scale=1]{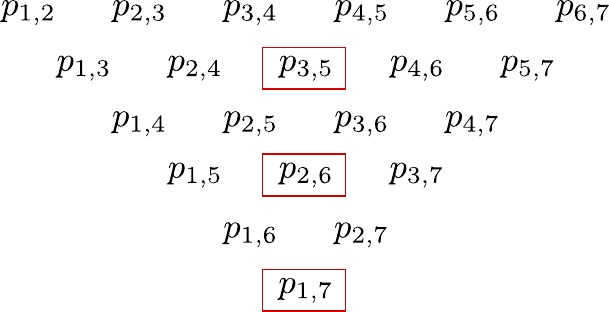}
	\caption{Scheme of $\pij{i}{j}$  for $1\leq i<j\leq d=7$. Coefficients inside the red squared boxes are denoted as \textit{central coefficients}.}\label{fig:p1}
\end{figure}
\begin{lemma}\label{lem:1}
	Let $p_1\geq\dotsm\geq p_d\geq 0$. Then, the following inequalities hold:
	\begin{align*}
	(i)& \qquad\qquad \, p_{i,l}\geq p_{i,j}+p_{j,l} ~~\, \quad \text{for }1\leq i<j<l\leq d,\\
	(ii)& \quad p_{i,l}-p_{i+1,l}\geq p_{i,k}-p_{i+1,k} ~~~ \text{for }1<i+1<k<l\leq d,\\
	(iii)& \quad p_{i,l}-p_{i,l-1}\geq p_{j,l}-p_{j,l-1} ~~~\,\,  \text{for }1\leq i<j<l-1<d.
	\end{align*}
\end{lemma}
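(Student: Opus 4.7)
The plan is to reduce all three inequalities to regularity properties of the symmetric bivariate function $G(x,y) = (x-y)^2/(x+y)$ (extended by $G(0,0)\coloneqq 0$) so that $p_{i,j} = G(p_i,p_j)$. The useful rewriting $G(x,y) = x + y - 4xy/(x+y)$ will make the relevant derivatives factor cleanly and their signs easy to read off.

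For part (i), I would fix $a = p_i$ and $c = p_l$ and define $F(b) = G(a,b) + G(b,c)$ on the interval $[c,a]$. Differentiating twice gives $\partial_b^2 G(a,b) = 8a^2/(a+b)^3 \geq 0$, and symmetrically for $G(b,c)$, so $F$ is convex on $[c,a]$. Since $G(a,a) = G(c,c) = 0$, the endpoints satisfy $F(a) = F(c) = G(a,c)$. Convexity of $F$ then forces $F(b) \leq G(a,c)$ for every $b \in [c,a]$, which is exactly (i) with $b = p_j$.

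Parts (ii) and (iii) both reduce, after relabeling, to the single submodularity-type statement $G(a,d) + G(b,c) \geq G(a,c) + G(b,d)$ whenever $a \geq b \geq c \geq d \geq 0$: for (ii) take $(a,b,c,d) = (p_i, p_{i+1}, p_k, p_l)$, and for (iii) take $(p_i, p_j, p_{l-1}, p_l)$. To establish this inequality I would compute the mixed partial $\partial_x \partial_y G(x,y) = -8xy/(x+y)^3 \leq 0$ and integrate over the rectangle $[b,a] \times [d,c]$; the nonpositivity of the integrand yields $G(a,c) + G(b,d) - G(a,d) - G(b,c) \leq 0$, which is the desired inequality.

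The only nuisance is the piecewise definition at $(0,0)$: whenever a relevant pair of eigenvalues both vanish the calculus argument does not apply directly, so one checks these edge cases by hand from the piecewise rule. Each reduces either to $0 \geq 0$ or to a trivial one-sided comparison, so they are quickly dispatched. The main conceptual step is recognizing the correct structural property of $G$ for each case (convexity along the diagonal direction for (i); submodularity for (ii) and (iii)); once that is identified, the proofs collapse to short sign computations on $\partial_b^2 G$ and $\partial_x\partial_y G$.
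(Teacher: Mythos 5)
Your proof is correct, and it is worth noting where it coincides with and departs from the paper's. The paper's entire argument rests on the single four-index exchange inequality $p_{i,l}+p_{j,k}\geq p_{i,k}+p_{j,l}$ for $i<j<k<l$, verified by writing the difference explicitly as a manifestly nonnegative rational function of $p_i,p_j,p_k,p_l$; items $(ii)$ and $(iii)$ are then immediate rearrangements, and $(i)$ is recovered as the degenerate case $p_j=p_k$. Your submodularity statement $G(a,d)+G(b,c)\geq G(a,c)+G(b,d)$ for $a\geq b\geq c\geq d\geq 0$ is exactly that exchange inequality, but you certify it analytically, via $\partial_x\partial_y G=-8xy/(x+y)^3\leq 0$ integrated over $[b,a]\times[d,c]$, rather than by the algebraic identity; and you handle $(i)$ separately through convexity of $b\mapsto G(a,b)+G(b,c)$ with equal endpoint values $G(a,c)$, where the paper again specializes the exchange inequality. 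Your route buys transparency: the signs of two derivatives replace a somewhat opaque polynomial identity, and the convexity picture for $(i)$ explains why the inequality is really a statement about the diagonal zeros of $G$. The paper's route buys a purely algebraic proof with no continuity or limiting arguments at the boundary of the quadrant, an explicit formula for the slack in the exchange inequality, and a single computation serving all three items. Your treatment of the $(0,0)$ edge cases is necessary and, as you say, routine: the ordering $p_1\geq\dotsm\geq p_d\geq 0$ forces any such configuration to degenerate the relevant rectangle or interval, and each inequality collapses to an identity.
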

\begin{proof}
	First we prove that
	\begin{equation}\label{eq:main_ineq}
	\pij{i}{l}+\pij{j}{k}\geq \pij{i}{k}+\pij{j}{l}\quad \text{for }1\leq i<j<k<l\leq d.
	\end{equation}
	If $p_i\geq p_j=p_k=p_l=0$, inequality \eqref{eq:main_ineq} holds trivially. Otherwise, we find
	\begin{align}
	\pij{i}{l}+\pij{j}{k}- \pij{i}{k}-\pij{j}{l}
	=\frac{4(p_i-p_{j})(p_{k}-p_{l})(p_{i}p_{j}p_{k}+p_{j}p_{k}p_{l}+p_{i}(p_{j}+p_{k})p_{l})}{(p_i+p_{k})(p_{j}+p_{k})(p_i+p_{l})(p_{j}+p_{l})},
	\end{align}
	which is clearly nonnegative because all factors in the denominator are positive and all factors in the numerator  are nonnegative. This proves inequality \eqref{eq:main_ineq}.
	
	Inequalities $(i)$, $(ii)$, and $(iii)$ from the lemma are special cases of inequality \eqref{eq:main_ineq}:	
	If $p_j=p_k$, inequality \eqref{eq:main_ineq} holds also for $j=k$ and it follows inequality $(i)$. Further, from inequality \eqref{eq:main_ineq} we find
	$\pij{i}{l}-\pij{j}{l} \geq \pij{i}{k}-\pij{j}{k}$ which for $j=i+1$ gives inequality $(ii)$, and we find $\pij{i}{l}-\pij{i}{k} \geq \pij{j}{l}-\pij{j}{k}$ which for $k=l-1$ gives inequality $(iii)$.
\end{proof}

In analogy to the coefficients $p_{i,j}$, we introduce another set of coefficients defined by
\begin{equation}\label{eq:qij}
\qij{i}{j}\coloneqq \sum_{k=i}^{j-1} q_{k,k+1},
\end{equation}
for $i<j$. This means that the set of coefficients $\ens{\qij{i}{j}}$ is fully defined by the coefficients $\qij{i}{i+1}$ with $i=1,\dotsc, d-1$.

\begin{proposition}\label{prop:1}
	For any dimension $d\geq 2$ and for any $p_1\geq \dotsm \geq p_d\geq 0$, there exist coefficients $q_{k,k+1}\geq 0$ with $1\leq k\leq d-1$ such that for $1\leq i<j\leq d$:
	\begin{align}
	q_{i,j}&=p_{i,j}\quad \text{if~~} j=d-i+1\qquad\text{(central coefficients),}\label{eq:con1}\\
	q_{i,j}&\geq p_{i,j}\quad\text{else,}\label{eq:con2}
	\end{align}
	where coefficients $p_{i,j}$ and $q_{i,j}$ are defined in Eqs.\,\eqref{eq:pij_sup} and \eqref{eq:qij}, respectively.
\end{proposition}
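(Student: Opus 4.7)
The plan is to prove the proposition by strong induction on the dimension $d$. The base cases $d=1$ (vacuous) and $d=2$ (set $q_{1,2} := p_{1,2}$, which is itself a central coefficient and there are no non-central pairs) are immediate. For the inductive step with $d \geq 3$, I apply the induction hypothesis to the shifted subsequence $(p_2, p_3, \ldots, p_{d-1})$ of length $d-2$: its coefficients $p_{i,j}$ for $2 \leq i < j \leq d-1$ are literally the original ones, and its centrality condition (pairs $(i',j')$ with $i'+j' = (d-2)+1$) translates under the shift to the original condition $i + j = d + 1$. The induction hypothesis yields $\tilde s_1, \ldots, \tilde s_{d-3} \geq 0$ satisfying the required equalities and inequalities for the subproblem, and I set $q_{k,k+1} := \tilde s_{k-1}$ for $k = 2, \ldots, d-2$. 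Every non-central inequality $q_{i,j} \geq p_{i,j}$ with $2 \leq i < j \leq d-1$ and every central equality at such $(i,j)$ is inherited for free.

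Only two new parameters $s_1 := q_{1,2}$ and $s_{d-1} := q_{d-1,d}$ remain to be set, with the single new central equality $q_{1,d} = p_{1,d}$. Combined with $q_{2,d-1} = p_{2,d-1}$ from the subproblem, this forces $s_1 + s_{d-1} = p_{1,d} - p_{2,d-1}$. The remaining non-central inequalities $q_{1,j} \geq p_{1,j}$ (for $2 \leq j \leq d-1$) and $q_{i,d} \geq p_{i,d}$ (for $2 \leq i \leq d-1$) rewrite via the induction hypothesis as lower bounds $s_1 \geq p_{1,j} - p_{2,j}$ and $s_{d-1} \geq p_{i,d} - p_{i,d-1}$. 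By the monotonicity inequalities Lemma~\ref{lem:1}(ii) and (iii) (with Lemma~\ref{lem:1}(i) handling the boundary cases $j=2$ and $i=d-1$), these maxima are attained at $j=d-1$ and $i=2$, giving the simpler sufficient bounds $s_1 \geq p_{1,d-1} - p_{2,d-1}$ and $s_{d-1} \geq p_{2,d} - p_{2,d-1}$.

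The concrete assignment will be $s_1 := p_{1,d-1} - p_{2,d-1}$ and $s_{d-1} := p_{1,d} - p_{1,d-1}$, both nonnegative (by Lemma~\ref{lem:1}(i) and by the monotonicity of $g(x,y)=(x-y)^2/(x+y)$ in its second argument, respectively). Their sum is $p_{1,d} - p_{2,d-1}$, matching the central-equality budget, and the only nontrivial remaining inequality $s_{d-1} \geq p_{2,d} - p_{2,d-1}$ rearranges to $p_{1,d} + p_{2,d-1} \geq p_{1,d-1} + p_{2,d}$, which is exactly the fundamental inequality \eqref{eq:main_ineq} at $(i,j,k,l) = (1,2,d-1,d)$ already established inside the proof of Lemma~\ref{lem:1}.

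The main obstacle is bookkeeping rather than novelty: one must verify that the two lower-bound families for $s_1$ and $s_{d-1}$, derived from many different non-central inequalities on the outer "ring," collapse under Lemma~\ref{lem:1}(ii)--(iii) to just the two outermost bounds, and that the resulting combined bound exactly fits into the budget $p_{1,d} - p_{2,d-1}$ forced by the new central equality. This fit is precisely the content of the fundamental inequality \eqref{eq:main_ineq}, so the induction closes cleanly with no extra slack beyond what Lemma~\ref{lem:1} already supplies.
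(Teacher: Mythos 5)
Your proof is correct and follows essentially the same route as the paper's: an induction that strips off the outermost pair of indices, applies the hypothesis to $(p_2,\dotsc,p_{d-1})$, and fixes the two new coefficients $q_{1,2}$ and $q_{d-1,d}$ so that they exactly exhaust the central budget $p_{1,d}-p_{2,d-1}$ while dominating the flanks via lemma~\ref{lem:1}. Your explicit assignment is just the mirror image (under $i\mapsto d+1-i$) of the paper's choice $q_{1,2}=p_{1,d}-p_{2,d}$, $q_{d-1,d}=p_{2,d}-p_{2,d-1}$, and you merge the paper's separate even/odd inductions into a single strong induction; both differences are cosmetic.
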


\begin{proof}
	The  proof works by induction in dimension $d$, once for even $d$ and once for odd $d$.\\
	\underline{Even dimension $d$}
	\paragraph*{Base case $d=2$:}
	There is only one coefficient $\pij{i}{j}$ with $1\leq i<j\leq 2$, which is $\pij{1}{2}$. The proposition for $d=2$ holds because $\qij{1}{2}\coloneqq\pij{1}{2}$ fulfills conditions \eqref{eq:con1} and \eqref{eq:con2} trivially.
	\paragraph*{Inductive step:}
	Suppose the proposition holds for $d=n$. We will prove the proposition for $d=n+2$. 
	
	First, the induction hypothesis is applied to $n$ coefficients $p_2,\dotsc, p_{n+1}$: For any $p_2\geq \dotsm \geq p_{n+1}\geq 0$, there exist coefficients $q_{k,k+1}\geq 0$ for $2\leq k\leq n$ such that for $2\leq i<j\leq n+1$:
	\begin{align}
	q_{i,j}&=p_{i,j}\quad \text{if~~} j=n-i+3\qquad\text{(central coefficients),}\label{eq:hypo1}\\
	q_{i,j}&\geq p_{i,j}\quad\text{else.}\label{eq:hypo2}
	\end{align}
	
	Second, we show that for any $p_1$ and $p_{n+2}$ with $p_1\geq p_2$ and $p_{n+1}\geq p_{n+2}\geq 0$ there exist two further coefficients $q_{1,2}$ and $q_{n+1,n+2}$ such that 
	\begin{align}
	q_{1,n+2}&=p_{1,n+2}\quad \text{(central coefficients),}\label{eq:sub1}\\
	q_{1,j}&\geq p_{1,j}\quad \text{for }j=2,\dotsc,n+1\quad  \text{(left flank),}\label{eq:sub2}\\
	q_{i,n+2}&\geq p_{i,n+2}\quad \text{for }i=2,\dotsc,n+1\quad  \text{(right flank).}\label{eq:sub3}
	\end{align}
	A graphical visualization of the inductive step is shown in Fig.~\ref{fig:p3} which explains the terms left flank and right flank used to designate the inequalities above.
	\begin{figure}[h!]
		\centering
		\includegraphics[scale=1]{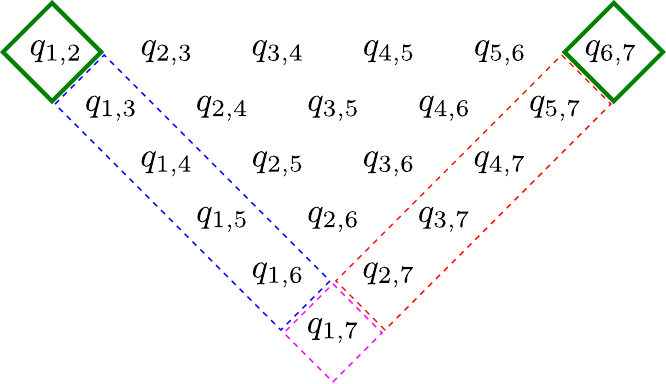}\hspace{1cm}\includegraphics[scale=1]{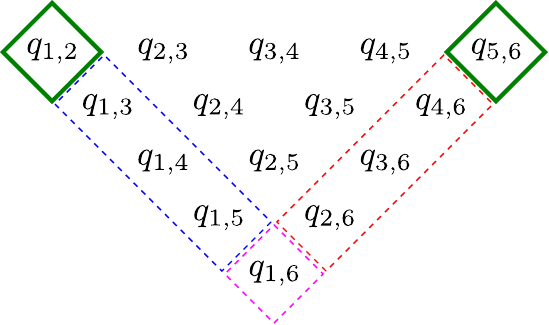}
		\caption{Recursion steps from $d=5$ to $d=7$ (left) and from $d=4$ to $d=6$ (right). In the green squares are the two new elements we need to choose. In the blue (resp.~red) rectangles are the new left (resp.~right) flanks that need to fulfill conditions \eqref{eq:con2} ; in the magenta squares are the new central coefficients that need to fulfill condition \eqref{eq:con1}.}\label{fig:p3}
	\end{figure}
	The existence of $q_{1,2}$ and $q_{n+1,n+2}$ such that conditions \eqref{eq:sub1},\eqref{eq:sub2}, and \eqref{eq:sub3} hold is shown explicitly by setting
	\begin{align}
	\qij{1}{2}&\coloneqq \pij{1}{n+2}-\pij{2}{n+2}\label{eq:setq1},\\
	\qij{n+1}{n+2}&\coloneqq \pij{2}{n+2}-\pij{2}{n+1}\label{eq:setq2},
	\end{align}
	and checking conditions \eqref{eq:sub1},\eqref{eq:sub2}, and \eqref{eq:sub3}:
	We find
	\begin{align*}
	\qij{1}{n+2}&= q_{1,2}+q_{2,n+1}+q_{n+1,n+2}&\text{[Eq.~\eqref{eq:qij}]}\\
	&=p_{1,n+2}-p_{2,n+2}+p_{2,n+1}+p_{2,n+2}-p_{2,n+1}&\text{[Eqs.\,\eqref{eq:setq1},\eqref{eq:setq2}, and Eq.~\eqref{eq:hypo1} for $i=2$]}\\
	&=p_{1,n+2}&
	\end{align*}
	which fulfills the condition for central coefficients  [condition \eqref{eq:sub1}]. Further, for $j=3,\dotsc,n+1$:
	\begin{align*}
	q_{1,j}&=q_{1,2}+q_{2,j}&\text{[Eq.~\eqref{eq:qij}]}\\
	&\geq \pij{1}{n+2}-\pij{2}{n+2}+p_{2,j}&\text{[Eq.~\eqref{eq:setq1} and inequality \eqref{eq:hypo2}]}\\
	&\geq p_{1,j}-p_{2,j}+p_{2,j}&\text{[lemma \ref{lem:1} $(ii)$]}\\
	&=p_{1,j},
	\end{align*}
	and
	\begin{align*}
	q_{1,2}&=\pij{1}{n+2}-\pij{2}{n+2}&\text{[Eq.~\eqref{eq:setq1}]}\\
	&\geq p_{1,2}+p_{2,n+2}-\pij{2}{n+2}&\text{[lemma \ref{lem:1} $(i)$]}\\
	&=p_{1,2},&
	\end{align*}
	which fulfill the conditions for the left flank [condition \eqref{eq:sub2}].
	The proof for the right flank is similar:  For $i=2,\dotsc,n$:
	\begin{align*}
	q_{i,n+2}&=q_{i,n+1}+q_{n+1,n+2}&\text{[Eq.~\eqref{eq:qij}]}\\
	&\geq p_{i,n+1}+\pij{2}{n+2}-\pij{2}{n+1}&\text{[Eq.~\eqref{eq:setq2} and inequality \eqref{eq:hypo2}]}\\
	&\geq p_{i,n+1}+\pij{i}{n+2}-\pij{i}{n+1}&\text{[lemma \ref{lem:1} $(iii)$]}\\
	&=\pij{i}{n+2},
	\end{align*}
	and
	\begin{align*}
	q_{n+1,n+2}&=\pij{2}{n+2}-\pij{2}{n+1}&\text{[Eq.~\eqref{eq:setq2}]}\\
	&\geq \pij{2}{n+1}+\pij{n+1}{n+2}-\pij{2}{n+1}&\text{[lemma \ref{lem:1} $(i)$]}\\
	&=\pij{n+1}{n+2},&
	\end{align*}
	which fulfill the conditions for the right flank [condition \eqref{eq:sub3}]. This proves the proposition  for $d=n+2$, concluding the proof by induction for even dimensions.\\
	\noindent\underline{Odd dimension $d$}
	\paragraph*{Base case d=3:}
	There are only three coefficients $\pij{i}{j}$ with $1\leq i<j\leq 3$, which are $\pij{1}{2},\pij{2}{3}$, and $\pij{1}{3}$. The proposition for $d=3$ holds because  $\qij{1}{2} \coloneqq \pij{1}{3} -\pij{2}{3}$, $\qij{2}{3} \coloneqq \pij{2}{3}$, and $\qij{1}{3}=\qij{1}{2}+\qij{2}{3}$ fulfill the conditions \eqref{eq:con1} and \eqref{eq:con2}: $\qij{1}{2} = \pij{1}{3} -\pij{2}{3}\geq\pij{1}{2} +\pij{2}{3}-\pij{2}{3}=\pij{1}{2}$ where inequality $(i)$ from lemma \ref{lem:1} was used, while the other conditions hold trivially. 
	\paragraph*{Inductive step:} Analog to the inductive step for even $d$.
\end{proof}

Equipped with proposition \ref{prop:1} we can prove theorem 1 from the Letter:
\begin{theorem}\label{th:main_sup}
	For any state $\rho$ and any generator $\dgen$ with ordered eigenvalues $p_1\geq \dotsm \geq p_d$ and $h_1\geq\dotsm\geq h_d$, respectively, the maximal QFI with respect to all unitary state preparations $U\rho U^\dagger$, $U\in\emph{U}(d)$, is given by
	\begin{align}\label{eq:opt_QFI_th1_sup}
	I_\alpha^*\coloneqq\max_{U} \qfinkd_\param (U\rho U^\dagger)=\frac{1}{2}\sum_{k=1}^{d}\pij{k}{d-k+1}(h_k-h_{d-k+1})^2.
	\end{align}
	Let $\ket{h_k}$ be the eigenvectors of the generator,  $\dgen\ket{h_k}=h_k\ket{h_k}$.
	The maximum $I_\alpha^*$ is obtained by preparing the initial state
	\begin{equation}\label{eq:opt_state_sup}
	\rho^*\coloneqq\sum_{k=1}^d p_k \ket{\phi_k}\bra{\phi_k}
	\end{equation}
	with
	\begin{equation}\label{eq:opt_eigen_th1_sup}
	\ket{\phi_k}\coloneqq\begin{cases*}
	\frac{\ket{h_k}+e^{\ii\chi_k}\ket{h_{d-k+1}}}{\sqrt{2}}\quad ~~\text{if } 2k<d+1, \\
	\mathrlap{\ket{h_k}}\hphantom{\frac{\ket{h_k}+\ket{h_{d-k+1}}}{\sqrt{2}}}\quad \quad~~~\text{if }2k=d+1,\\
	\frac{\ket{h_k}-e^{\ii\chi_k}\ket{h_{d-k+1}}}{\sqrt{2}}\quad ~~\text{if } 2k>d+1.
	\end{cases*}
	\end{equation}
	where $\chi_k$ are arbitrary real phases (the theorem as formulated in the Letter is recovered by setting $\chi_k=0$).
\end{theorem}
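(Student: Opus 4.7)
The plan is to follow the strategy indicated in Fig.~\ref{fig:proof_schematic}: construct an upper bound $J_\alpha(\rho) \ge I_\alpha(\rho)$, show that this bound is dominated by the right-hand side of Eq.~\eqref{eq:opt_QFI_th1_sup} via the Bloomfield--Watson inequality, and verify by direct computation that $I_\alpha(\rho^*)$ already attains the bound. Because a unitary state preparation on $\rho$ is equivalent to choosing an arbitrary orthonormal basis $\{|\psi_k\rangle\}$ in which to expand the QFI of Eq.~\eqref{eq:qfi}, the maximization amounts to optimizing the matrix elements $\langle\psi_k|\dgen|\psi_\ell\rangle$ while keeping the spectrum $\{p_k\}$ fixed.

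First I would apply Proposition~\ref{prop:1} to pass to coefficients $q_{k,\ell}$ satisfying $q_{k,\ell}\ge p_{k,\ell}$ in general and $q_{k,d-k+1}=p_{k,d-k+1}$ on the anti-diagonal, and define $J_\alpha(\rho) \coloneqq 4\sum_{1\le k<\ell\le d} q_{k,\ell}\,|\langle\psi_k|\dgen|\psi_\ell\rangle|^2 \ge I_\alpha(\rho)$, where the factor $4$ absorbs the symmetry $p_{k,\ell}=p_{\ell,k}$ and the vanishing of $p_{k,k}$. The crucial feature of the $q_{k,\ell}$ is their telescoping structure $q_{k,\ell}=\sum_{j=k}^{\ell-1}q_{j,j+1}$: substituting and swapping the order of summation recasts $J_\alpha$ as $4\sum_{j=1}^{d-1}q_{j,j+1}\,\|B_j\|_{\mathrm{HS}}^2$, where $B_j$ is the $j\times(d-j)$ off-diagonal block of $\dgen$ in the basis $\{|\psi_k\rangle\}$ that couples the first $j$ vectors to the remaining $d-j$. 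The Bloomfield--Watson inequality then bounds each block by $\|B_j\|_{\mathrm{HS}}^2 \le \tfrac{1}{4}\sum_{k=1}^{\min(j,d-j)}(h_k-h_{d-k+1})^2$; substituting these bounds and exchanging the order of summation once more, the coefficient of $(h_k-h_{d-k+1})^2$ collects into $\sum_{j=k}^{d-k}q_{j,j+1}=q_{k,d-k+1}=p_{k,d-k+1}$, reproducing precisely the right-hand side of Eq.~\eqref{eq:opt_QFI_th1_sup}.

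To complete the argument I would verify saturation by direct computation on $\rho^*$. The choice of eigenvectors in Eq.~\eqref{eq:opt_eigen_th1_sup} ensures that for $\ell\neq d-k+1$ the vectors $|\phi_k\rangle,|\phi_\ell\rangle$ lie in mutually orthogonal $\dgen$-eigenspaces and the corresponding matrix elements vanish, while for $\ell=d-k+1$ a short calculation gives $|\langle\phi_k|\dgen|\phi_{d-k+1}\rangle|^2=(h_k-h_{d-k+1})^2/4$, independent of the phases $\chi_k$. Plugging into Eq.~\eqref{eq:qfi} yields $I_\alpha(\rho^*) = \tfrac{1}{2}\sum_k p_{k,d-k+1}(h_k-h_{d-k+1})^2$, matching the upper bound and proving optimality.

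The main obstacle is the combinatorial bookkeeping that makes the two summation exchanges consistent: one needs coefficients $q_{j,j+1}\ge 0$ whose partial sums $\sum_{j=k}^{d-k}q_{j,j+1}$ reproduce the central $p_{k,d-k+1}$ for every $k$, while still dominating every off-central $p_{k,\ell}$. The three inequalities of Lemma~\ref{lem:1} are precisely what make the inductive construction of Proposition~\ref{prop:1} consistent, and without this ``equal on the anti-diagonal, dominating elsewhere'' structure the Bloomfield--Watson step alone would not yield a tight bound.
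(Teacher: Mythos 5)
Your proposal is correct and follows essentially the same route as the paper's proof: the reformulation as a basis optimization, the upper bound $J_\alpha$ built from the coefficients of Proposition~\ref{prop:1}, the telescoping rewrite into Hilbert--Schmidt norms of off-diagonal blocks, the Bloomfield--Watson bound, and saturation by direct computation on $\rho^*$. The only (harmless) difference is organizational—you push the Bloomfield--Watson bound through the sum to bound $J_\alpha(B)$ directly by the right-hand side of Eq.~\eqref{eq:opt_QFI_th1_sup}, whereas the paper first shows $J_\alpha(B)\le J_\alpha(B^*)$ and then $J_\alpha(B^*)=I_\alpha(B^*)$—and the diagonal matrix elements $\langle\phi_k|\dgen|\phi_k\rangle$ do not actually vanish, but they are harmless because $p_{k,k}=0$.
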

\begin{proof}
	First we reformulate the optimization problem in a more convenient way:
	
	The unitary state preparation $U\rho U^\dagger$ has invariant eigenvalues for all $U\in\text{U}(d)$. However, the unitary freedom $U\in\text{U}(d)$ allows one to change the basis from the ordered orthonormal basis of eigenvectors $(\ket{\psi_i})_{i=1}^d$ of $\rho$, where $\rho\ket{\psi_i}=p_i\ket{\psi_i}$, to any other ordered orthonormal basis. Therefore, the optimization problem with respect to unitary transformations $U\in\text{U}(d)$ on the state $\rho$ is equivalent to optimizing over ordered bases $B\in S$ where
	\begin{align}
	S\coloneqq\{(\ket{\xi_i})_{i=1}^d : \braket{\xi_i|\xi_j}=\delta_{i,j}~\forall i,j\in\{1,\dotsc,d\}\}.
	\end{align}
	Note that the ordering of eigenvectors corresponds to the ordering of eigenvalues $p_i$ which plays a crucial role in the theorem. 
	The basis corresponding to $\rho^*$ is given by $B^*=(\ket{\phi_i})_{i=1}^d$, and
	the maximization in Eq.~\eqref{eq:opt_QFI_th1_sup} is equivalent to
	\begin{align}\label{eq:max_qfi_reformulated}
	I_\alpha^*\coloneqq\max_{ B \in S}I_\param ( B ),
	\end{align}
	where the QFI was redefined as a function of $B$:
	\begin{align}
	I_\alpha( B)&=2\sum_{i,j=1}^d \pij{i}{j}|\left[h_{\alpha} ( B )\right]_{i,j}|^2.
	\end{align}
	The coefficients $p_{i,j}$ are defined in Eq.~\eqref{eq:pij_sup} with respect to the eigenvalues $p_i$ and $\left[h_{\alpha} ( B )\right]_{i,j}=\bra{\xi_i} h_{\alpha} \ket{\xi_j}$ are the coefficients of $h_{\alpha}$ with respect to $B=(\ket{\xi_i})_{i=1}^d$.
	
	In order to prove that the maximum is reached  by $B^*$, we introduce an upper bound for the QFI. We start by rewriting the QFI, exploiting the symmetries $p_{i,j}=p_{j,i}$ and $\absv{\left[h_{\alpha} ( B )\right]_{i,j}}^2=\absv{\left[h_{\alpha} ( B )\right]_{j,i}}^2$:
	\begin{align}
	\qfinkd_\param (B)&=2\sum_{i,j=1}^{d} \pij{i}{j}\absv{\left[h_{\alpha} ( B )\right]_{i,j}}^2\\
	&=4\sum_{i=1}^{d-1} \sum_{j=i+1}^{d}\pij{i}{j} \absv{\left[h_{\alpha} ( B )\right]_{i,j}}^2\label{eq:replace}.
	\end{align}
	Then, an upper bound for $I_\alpha(B)$ is obtained by replacing coefficients $p_{i,j}$ in Eq.~\eqref{eq:replace} with new coefficients $q_{i,j}\geq p_{i,j}$ for all $1\leq i<j\leq d$:
	\begin{align}
	\qfinkd_\param (B)\leq J_\param (B)\coloneqq 4\sum_{i=1}^{d-1} \sum_{j=i+1}^{d}\qij{i}{j} \absv{\left[h_{\alpha} ( B )\right]_{i,j}}^2,\label{eq:J} 
	\end{align}
	where  $J_\param (B)$ denotes the upper bound. We choose coefficients $q_{i,j}$ according to proposition \ref{prop:1}, i.e., besides $q_{i,j}\geq p_{i,j}$ they fulfill $p_{i,j}=q_{i,j}$ for $j=d-i+1$ and $q_{i,j}=\sum_{k=i}^{j-1}q_{k,k+1}$ for all $1\leq i<j\leq d$. We rewrite the upper bound $J_\param (B)$:
	\begin{align}
	J_\param (B)&=4\sum_{i=1}^{d-1} \sum_{j=i+1}^{d}\qij{i}{j} \absv{\left[h_{\alpha} ( B )\right]_{i,j}}^2\\
	&=4\sum_{i=1}^{d-1} \sum_{j=i+1}^{d}\sum_{k=i}^{j-1} q_{k,k+1} \absv{\left[h_{\alpha} ( B )\right]_{i,j}}^2\\
	&=4\sum_{k=1}^{d-1}q_{k,k+1}\sum_{i=1}^{k} \sum_{j=k+1}^{d}  \absv{\left[h_{\alpha} ( B )\right]_{i,j}}^2\\
	&=4\sum_{k=1}^{d-1}q_{k,k+1} \pnorm{h_\param( B ,k)}{2}^2,\label{eq:last_J}
	\end{align}
	where $h_\param( B ,k)$ denotes the subblock of $h_\param( B )$ with coefficients from the 1st to the $k$th row and from the $(k+1)$th to the $d$th column, and  $\pnorm{\cdot}{2}^2$ denotes the Hilbert--Schmidt norm which is defined for a $m\times n$ matrix $A$ as $\pnorm{A}{2}^2\coloneqq\trb{A^\dagger A}= \sum_{i,j=1}^{m,n} \absv{A_{i,j}}^2$. Since $h_\param( B )$ is Hermitian it divides in subblocks as
	\begin{align}
	h_\param( B )=\begin{pmatrix}
	\bullet &h_\param( B ,k)\\
	h_\param^\dagger ( B ,k)& \bullet\\
	\end{pmatrix},
	\end{align}
	where the quadratic subblocks on the diagonal are not further specified.

	Next, we maximize the upper bound $J_\param(B)$ and show that it equals the QFI at its maximum.
	In order to maximize $J_\param(B)$, we use the Bloomfield--Watson inequality \cite{bloomfield1975inefficiency} on the Hilbert--Schmidt norm of off-diagonal blocks such as $h_\param ( B ,k)$. We take a convenient formulation of the inequality from Ref.\cite[Eqs.\,(1.14) and (4.3)]{drury2002some} and apply it to $h_\param( B ,k)$:
	\begin{equation}\label{eq:Bloom}
	\pnorm{h_\param( B ,k)}{2}^2 \leq \frac{1}{4} \sum_{i=1}^{m(k)} (h_i-h_{d-i+1})^2,
	\end{equation}
	where $m(k)=\min(k,d-k)$. We evaluate the left-hand side of the Bloomfield--Watson inequality \eqref{eq:Bloom} for $B=B^*$, where $B^*$ is the eigenbasis of $\rho^*$, defined above Eq.~\eqref{eq:max_qfi_reformulated}:
	\begin{align}
	\pnorm{h_\param( B ^*,k)}{2}^2 &=\sum_{i=1}^{k} \sum_{j=k+1}^{d}  \absv{\left[h_{\alpha} ( B ^*)\right]_{i,j}}^2\\
	&=\sum_{i=1}^{k} \sum_{j=k+1}^{d}  \absv{\bra{\phi_i} \dgen \ket{\phi_j}}^2 \label{eq:step1}\\
	&= \sum_{i=1}^{k} \sum_{j=k+1}^{d} \left(\delta_{i,j}\frac{h_i+h_{d-i+1}}{2} + \delta_{i,d-j+1}\frac{|h_i-h_{d-i+1}|}{2}\right)^2 \label{eq:step2}\\
	&= \frac{1}{4}\sum_{i=1}^{m(k)} (h_i-h_{d-i+1})^2, \label{eq:step3}
	\end{align}
	where we used the definition of $\ket{\phi_i}$ [Eq.~\eqref{eq:opt_eigen_th1_sup}] to get from Eq.~\eqref{eq:step1} to \eqref{eq:step2}.
	In Eq.~\eqref{eq:step2}, the first summand (within the brackets) evaluates always to zero while the second summand is nonzero in $m(k)$ cases as given in Eq.~\eqref{eq:step3}. 
	Note, that Eq.~\eqref{eq:step3} equals the right-hand side of inequality \eqref{eq:Bloom}. Therefore, the Bloomfield--Watson inequality \eqref{eq:Bloom} is saturated for $B=B^*$ and, in particular, $\pnorm{h_\param( B,k)}{2}^2\leq \pnorm{h_\param( B ^*,k)}{2}^2$ for all $B\in S$.
	
	This implies $J_\param (B)\leq J_\param (B^*)$  for all $B\in S$ which can be seen from Eq.~\eqref{eq:last_J} and by realizing that the coefficients
	$\qij{i}{j}$ in $J_\param (B)$ are nonnegative, which follows from the nonnegativity of $\pij{i}{j}$. Thus, $J_\param (B^*)$ is the maximum of  $J_\param (B)$ with respect to $B$.
	
	Now, we show that $J_\param (B^*)=I_\param (B^*)$ starting from the definition of $J_\param (B)$ in Eq.~\eqref{eq:J}:
	\begin{align}
	J_\param (B ^*)&=4\sum_{i=1}^{d-1} \sum_{j=i+1}^{d}\qij{i}{j} \left(\delta_{i,j}\frac{h_i+h_{d-i+1}}{2} + \delta_{i,d-j+1}\frac{h_i-h_{d-i+1}}{2}\right)^2\\
	&=\sum_{i=1}^{d-1} \sum_{j=i+1}^{d}\qij{i}{j}\delta_{i,d-j+1}(h_i-h_{d-i+1})^2  \label{eq:step4}\\
	&=\frac{1}{2}\sum_{i=1}^{d}\pij{i}{i-d+1}(h_i-h_{d-i+1})^2=I_\param (B ^*),\label{eq:onehalf}
	\end{align}
	where we used $\qij{i}{i-d+1}=\pij{i}{i-d+1}$ and, to get from Eq.~\eqref{eq:step4} to \eqref{eq:onehalf}, we first came back to a summation over all $1\leq i,j\leq d$ before evaluating $\delta_{i,d-j+1}$ which explains the factor $1/2$ in Eq.~\eqref{eq:onehalf}.
	
	It follows from $J_\param (B )\geq I_\param (B )~\forall B\in S$ that $\displaystyle \max_{ B \in S}J_\param (B )\geq \max_{ B \in S}I_\param (B )$, and, then, it follows from $\displaystyle\max_{ B \in S}J_\param (B )=I_\param (B ^*)$ that $I_\param (B ^*)$ is the maximum of $I_\param (B)$ with respect to $B\in S$.
\end{proof}

\section{Proof of theorem 2}
Let us first introduce some notation. The real, nonnegative coordinate space of $d$ dimensions is denoted by $\mathbb{R}_+^d$. 
For two vectors $\boldsymbol{x},\boldsymbol{y}\in \mathbb{R}_+^d$, the element-wise vector ordering $x_i\leq y_i$ for all $i\in\{1,\dotsc, d\}$ is denoted as $\boldsymbol{x}\leq \boldsymbol{y}$. 
For any $\boldsymbol{x}\in \mathbb{R}^d_+$, let $x_{[1]},\dotsc,x_{[d]}$ be the components of $\boldsymbol{x}$ in decreasing order, and let
\begin{align}
\boldsymbol{x}_\downarrow \coloneqq(x_{[1]},\dotsc,x_{[d]})
\end{align}
denote the decreasing rearrangement of $\boldsymbol{x}$.
Let
\begin{align}
\mathcal{D}_+^d&\coloneqq\{(x_1,\dotsc,x_d):x_1\geq \dotsm\geq x_d\geq 0\}
\end{align}
be the set of decreasing rearrangements of elements from $ \mathbb{R}^d_+$.
\begin{definition}
	For a hermitian matrix $X$ with eigenvalues $x_1\geq x_2\geq\dotsm\geq x_d$ define 
	\begin{align}
	\boldsymbol{d}(X)\coloneqq(x_1-x_d,x_2-x_{d-1}\dotsc,x_{\ceil*{d/2}}-x_{d-\ceil*{d/2}+1}),
	\end{align}
	where $\ceil*{d/2}$ denotes the smallest integer $j$ with $j\geq d/2$.
\end{definition}
Note that the entries of $\boldsymbol{d}(X)$ are nonnegative and in decreasing order, i.e., $\boldsymbol{d}(X)\in \mathcal{D}_+^{\ceil*{d/2}}$.
\begin{definition}
	Let $\boldsymbol{x},\boldsymbol{y}\in\mathbb{R}_+^d$. We say that $\boldsymbol{x}$ is weakly majorized by $\boldsymbol{y}$, denoted by $\boldsymbol{x}\prec_w\boldsymbol{y}$, if 
	\begin{align}
	\sum_{i=1}^kx_{[i]}\leq \sum_{i=1}^ky_{[i]}\quad \forall k=1,\dotsc, d.
	\end{align}
\end{definition}
\begin{lemma}\label{lem:weakmajo}
	Let $A$, $B$, and $C=A+B$ be hermitian matrices with eigenvalues $a_1\geq\dotsm \geq a_d$, $b_1\geq\dotsm \geq b_d$, and $c_1\geq\dotsm\geq c_d$, respectively.  Then,	$\boldsymbol{d}(C)\prec_w\boldsymbol{d}(A)+\boldsymbol{d}(B)$. 
\end{lemma}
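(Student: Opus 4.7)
The plan is to reduce the weak-majorization claim to a pair of Ky Fan-type trace inequalities for sums of Hermitian matrices, one applied to $C=A+B$ and the other to $-C=(-A)+(-B)$.

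First I would observe that $\boldsymbol{d}(C)$ is decreasingly arranged by construction, and $\boldsymbol{d}(A)+\boldsymbol{d}(B)$ is decreasingly arranged as well because each summand lies in $\mathcal{D}_+^{\lceil d/2\rceil}$. Consequently $[\boldsymbol{d}(C)]_{[i]}=[\boldsymbol{d}(C)]_i$ and $[\boldsymbol{d}(A)+\boldsymbol{d}(B)]_{[i]}=[\boldsymbol{d}(A)]_i+[\boldsymbol{d}(B)]_i$, so the weak-majorization statement reduces to establishing, for every $k\in\{1,\dotsc,\lceil d/2\rceil\}$,
\begin{equation*}
\sum_{i=1}^{k}(c_i-c_{d-i+1})\leq\sum_{i=1}^{k}(a_i-a_{d-i+1})+\sum_{i=1}^{k}(b_i-b_{d-i+1}).
\end{equation*}
Splitting each bracket, this is equivalent to
\begin{equation*}
\Bigl(\sum_{i=1}^{k}c_i-\sum_{i=d-k+1}^{d}c_i\Bigr)\leq \Bigl(\sum_{i=1}^{k}a_i-\sum_{i=d-k+1}^{d}a_i\Bigr)+\Bigl(\sum_{i=1}^{k}b_i-\sum_{i=d-k+1}^{d}b_i\Bigr),
\end{equation*}
where for odd $d$ and $k=\lceil d/2\rceil$ the central index $(d+1)/2$ appears in both the top-$k$ and bottom-$k$ sum on each side and cancels in a matched way.

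Next I would invoke Ky Fan's maximum principle (already cited in the Letter via Refs.~\cite{fan1949theorem,fulton2000eigenvalues}): for Hermitian $X=Y+Z$ with eigenvalues in weakly decreasing order,
\begin{equation*}
\sum_{i=1}^{k}x_i\leq\sum_{i=1}^{k}y_i+\sum_{i=1}^{k}z_i\qquad\forall k.
\end{equation*}
Applied to $C=A+B$, this furnishes the required inequality for the top-$k$ eigenvalues. Applied instead to $-C=(-A)+(-B)$, and using that the top-$k$ eigenvalues of $-X$ are $-x_d,\dotsc,-x_{d-k+1}$, the same principle yields $-\sum_{i=d-k+1}^{d}c_i\leq -\sum_{i=d-k+1}^{d}a_i-\sum_{i=d-k+1}^{d}b_i$, i.e., the bottom-$k$ trace of $C$ dominates the sum of the bottom-$k$ traces of $A$ and $B$.

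Adding the top-$k$ inequality to the (rearranged) bottom-$k$ inequality term by term produces precisely the display shown above, which as noted is equivalent to the sought weak-majorization. There is no substantive obstacle here beyond bookkeeping; the only subtle point is the index overlap at $k=(d+1)/2$ in odd dimension, which is handled automatically since the coincident middle terms cancel on both sides and the corresponding entry of $\boldsymbol{d}(\cdot)$ vanishes by definition.
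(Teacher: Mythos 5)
Your proof is correct and follows essentially the same route as the paper: both rest on Ky Fan's inequality for the top-$k$ eigenvalue sums of $C=A+B$, combined with the corresponding reversed inequality for the bottom-$k$ sums (which you obtain by applying Ky Fan to $-C=(-A)+(-B)$, while the paper subtracts Ky Fan from the trace identity --- these are equivalent). Your explicit handling of the decreasing arrangement of $\boldsymbol{d}(A)+\boldsymbol{d}(B)$ and of the coincident middle index for odd $d$ is a welcome bit of extra care, but introduces no new substance.
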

\begin{proof}
	The inequalities of K.~Fan (see for instance \cite[eq.3]{fulton2000eigenvalues}) for the eigenvalues of $A$, $B$, and $C=A+B$ are
	\begin{align}\label{eq:ineq1}
	\sum_{i=1}^{r}c_i\leq \sum_{i=1}^{r}a_i+b_i\quad \forall r=1,\dotsc,d-1.
	\end{align}
	Subtracting them from the trace condition
	\begin{align}
	\sum_{i=1}^{d}c_i= \sum_{i=1}^{d}a_i+b_i
	\end{align}
	and rearranging the indices gives
	\begin{align}\label{eq:ineq2}
	\sum_{i=1}^{r}c_{d-i+1}\geq \sum_{i=1}^{r}a_{d-i+1}+b_{d-i+1}\quad \forall r=1,\dotsc,d-1.
	\end{align}
	Subtracting inequality \eqref{eq:ineq2} from inequality \eqref{eq:ineq1} gives
	\begin{align}
	\sum_{i=1}^{r}c_i-c_{d-i+1}\leq \sum_{i=1}^{r}a_i-a_{d-i+1}+b_i-b_{d-i+1}\quad \forall r=1,\dotsc,d-1,
	\end{align}
	which are for $r=1,\dotsc,\ceil*{d/2}$ the weak majorization conditions for $\boldsymbol{d}(C)\prec_w\boldsymbol{d}(A)+\boldsymbol{d}(B)$.
\end{proof}
\begin{definition}\label{def:f}
	For any $\boldsymbol{p}\in\mathcal{D}_+^d$  define 
	\begin{align}
	\phi_{\boldsymbol{p}}: \mathbb{R}_+^d \rightarrow \mathbb{R}, \phi_{\boldsymbol{p}}(\boldsymbol{x})\coloneqq\sum_{i=1}^d p_ix_{[i]}^2.
	\end{align}
\end{definition}
\begin{lemma}\label{lem:schur}
	For any $\boldsymbol{p}\in\mathcal{D}_+^d$, 
	$\phi_{\boldsymbol{p}}$ is increasing and Schur convex on $\mathbb{R}_+^d$, i.e., the following conditions hold \emph{\cite[part I,ch.3,A.4]{marshall1979inequalities}}:
	\begin{enumerate}[(i)]
		\item $\boldsymbol{x}\leq \boldsymbol{y}\Rightarrow \phi_{\boldsymbol{p}}(\boldsymbol{x})\leq \phi_{\boldsymbol{p}}(\boldsymbol{y})$ (increasing),
		\item $\phi_{\boldsymbol{p}}(\boldsymbol{x})$ is invariant under permutation of coefficients of $\boldsymbol{x}$ for any $\boldsymbol{x}\in\mathbb{R}^d_+$ (symmetric),
		\item $(x_i-x_j)\left(\frac{\partial \phi_{\boldsymbol{p}}(\boldsymbol{x})}{\partial x_i}-\frac{\partial \phi_{\boldsymbol{p}}(\boldsymbol{x})}{\partial x_j}\right)\geq 0$ $\forall \boldsymbol{x}\in \mathbb{R}_+^d$ and $\forall i\neq j$ (Schur's condition).
	\end{enumerate}
\end{lemma}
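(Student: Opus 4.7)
The plan is to verify the three listed conditions (i), (ii), (iii) one at a time, each reducing to an elementary observation about the decreasing rearrangement combined with the assumption that $\boldsymbol{p}\in\mathcal{D}_+^d$ (so the weights $p_i$ are ordered and nonnegative).

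For condition (i) I would first establish the auxiliary fact that the element-wise order is preserved under decreasing rearrangement: if $\boldsymbol{x},\boldsymbol{y}\in\mathbb{R}_+^d$ satisfy $\boldsymbol{x}\leq\boldsymbol{y}$ coordinate-wise, then $x_{[i]}\leq y_{[i]}$ for every $i$. This follows by the standard counting argument that for each $i$ there are $d-i+1$ indices $j$ with $y_j\leq y_{[i]}$, and on those same indices $x_j\leq y_j\leq y_{[i]}$, so at least $d-i+1$ entries of $\boldsymbol{x}$ do not exceed $y_{[i]}$, forcing $x_{[i]}\leq y_{[i]}$. Nonnegativity then gives $x_{[i]}^2\leq y_{[i]}^2$, and summing against $p_i\geq 0$ yields $\phi_{\boldsymbol{p}}(\boldsymbol{x})\leq\phi_{\boldsymbol{p}}(\boldsymbol{y})$.

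Condition (ii) is immediate from the definition: $\phi_{\boldsymbol{p}}$ depends on $\boldsymbol{x}$ only through its decreasing rearrangement $\boldsymbol{x}_\downarrow$, which is obviously invariant under permuting the components of $\boldsymbol{x}$. For condition (iii) I would work on the open dense subset of $\mathbb{R}_+^d$ where all components are pairwise distinct. Letting $\sigma$ be the permutation such that $x_{\sigma(1)}>\dotsm>x_{\sigma(d)}$, this ordering persists in a neighborhood, so locally $\phi_{\boldsymbol{p}}(\boldsymbol{x})=\sum_k p_k x_{\sigma(k)}^2$ and hence $\partial_{x_k}\phi_{\boldsymbol{p}}=2\,p_{\sigma^{-1}(k)}x_k$. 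For $i\neq j$ with $x_i>x_j$ one has $\sigma^{-1}(i)<\sigma^{-1}(j)$ and therefore $p_{\sigma^{-1}(i)}\geq p_{\sigma^{-1}(j)}$, giving
\begin{equation*}
\partial_{x_i}\phi_{\boldsymbol{p}}-\partial_{x_j}\phi_{\boldsymbol{p}}=2\bigl(p_{\sigma^{-1}(i)}x_i-p_{\sigma^{-1}(j)}x_j\bigr)\geq 2\,p_{\sigma^{-1}(j)}(x_i-x_j)\geq 0,
\end{equation*}
so $(x_i-x_j)(\partial_{x_i}\phi_{\boldsymbol{p}}-\partial_{x_j}\phi_{\boldsymbol{p}})\geq 0$. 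The cases $x_i<x_j$ and $x_i=x_j$ follow by symmetry and triviality.

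I do not expect any serious obstacle: the statement is essentially bookkeeping with decreasing rearrangements and nonnegative, monotone weights. The only minor technical point is that the partial derivatives in (iii) fail to exist at points with coincident components, but the Schur condition from Marshall--Olkin \cite{marshall1979inequalities} is applied on the dense open set where the partials exist, and continuity of $\phi_{\boldsymbol{p}}$ absorbs the exceptional set.
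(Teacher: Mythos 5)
Your proof is correct and follows essentially the same route as the paper's: direct verification of (i)--(iii), with (iii) reducing to the observation that the $\boldsymbol{p}$-weight attached to the larger of $x_i,x_j$ dominates the weight attached to the smaller one. You are in fact slightly more careful than the paper on two points it leaves implicit, namely the counting argument showing that the decreasing rearrangement preserves the coordinate-wise order, and the restriction of the Schur condition to the dense open set where the partial derivatives of $\phi_{\boldsymbol{p}}$ exist.
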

\begin{proof}
	From $\boldsymbol{x}\leq \boldsymbol{y}$ it follows that $x_{[i]}\leq y_{[i]}~\forall i$, which implies $p_ix_{[i]}^2\leq p_iy_{[i]}^2~\forall i$ for any $p_i\geq 0$. Finally it follows
	$\sum_i p_ix_{[i]}^2\leq \sum_ip_iy_{[i]}^2$ which proves condition $(i)$.
	Condition $(ii)$ follows directly from the definition of $\phi_{\boldsymbol{p}}$.
	Finally, we have
	\begin{align}
	(x_i-x_j)\left(\frac{\partial \phi_{\boldsymbol{p}}(\boldsymbol{x})}{\partial x_i}-\frac{\partial \phi_{\boldsymbol{p}}(\boldsymbol{x})}{\partial x_j}\right)=(x_i-x_j)2(qx_i-rx_j),
	\end{align}
	where $q,r$ are some components of $\boldsymbol{p}$ with 
	$q\geq r$ if $x_i\geq x_j$ and $q\leq r$ if $x_i\leq x_j$ due to the definition of $\phi_{\boldsymbol{p}}$.
	It follows condition $(iii)$.	
\end{proof}
\begin{lemma}\label{lem:Schurconvex}
	Let $A$, $B$, and $C=A+B$ be Hermitian matrices. For any $\boldsymbol{p}\in\mathcal{D}_+^d$,
	\begin{align}
	\boldsymbol{d}(C)\prec_w\boldsymbol{d}(A)+\boldsymbol{d}(B)~\Rightarrow~\phi_{\boldsymbol{p}}(\boldsymbol{d}(C))\leq \phi_{\boldsymbol{p}}(\boldsymbol{d}(A)+\boldsymbol{d}(B)).
	\end{align}
\end{lemma}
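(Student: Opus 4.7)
The plan is to recognize this statement as a direct instance of a classical theorem from majorization theory: a function that is simultaneously \emph{increasing in each coordinate} and \emph{Schur convex} automatically preserves the weak majorization order (see e.g.\ \cite[Ch.~3, Prop.~C.1]{marshall1979inequalities}). Since Lemma~\ref{lem:schur} already establishes precisely these two properties of $\phi_{\boldsymbol{p}}$, what remains is essentially to invoke this theorem with $\boldsymbol{x}=\boldsymbol{d}(C)$ and $\boldsymbol{y}=\boldsymbol{d}(A)+\boldsymbol{d}(B)$.

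First, I would recall from Lemma~\ref{lem:schur} that $\phi_{\boldsymbol{p}}$ is increasing and Schur convex on $\mathbb{R}_+^d$ (with $d$ here denoting the common length of the vectors involved, namely $\lceil d_{\mathrm{mat}}/2\rceil$ where $d_{\mathrm{mat}}$ is the matrix dimension; note also that the entries of $\boldsymbol{d}(A)$, $\boldsymbol{d}(B)$, and $\boldsymbol{d}(C)$ are nonnegative so the domain is respected).

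Second, I would invoke the standard bridging lemma: whenever $\boldsymbol{x}\prec_w\boldsymbol{y}$ with $\boldsymbol{x},\boldsymbol{y}\in\mathbb{R}_+^d$, there exists $\boldsymbol{z}\in\mathbb{R}_+^d$ such that $\boldsymbol{x}\leq \boldsymbol{z}$ (componentwise, after sorting) and $\boldsymbol{z}\prec\boldsymbol{y}$ (full majorization, i.e.\ weak majorization with equality of the total sums). The construction of $\boldsymbol{z}$ is explicit: set $z_{[i]}=y_{[i]}$ for small $i$ and inflate the remaining coordinates of $\boldsymbol{x}$ just enough so the partial sums match; this is a textbook argument (cf.\ \cite{marshall1979inequalities}, Ch.~5).

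Third, I would chain the two properties: by the increasing property applied coordinatewise (after using symmetry of $\phi_{\boldsymbol{p}}$ to assume both vectors are sorted in the same order),
\begin{equation*}
\phi_{\boldsymbol{p}}(\boldsymbol{x})\leq \phi_{\boldsymbol{p}}(\boldsymbol{z}),
\end{equation*}
and by Schur convexity combined with $\boldsymbol{z}\prec\boldsymbol{y}$,
\begin{equation*}
\phi_{\boldsymbol{p}}(\boldsymbol{z})\leq \phi_{\boldsymbol{p}}(\boldsymbol{y}),
\end{equation*}
which yields the desired inequality. The only genuinely non-routine step is the existence of the bridging vector $\boldsymbol{z}$, but as noted this is a standard construction and can simply be cited from \cite{marshall1979inequalities}; once this is in place the proof is a one-line chaining of two inequalities coming directly from Lemma~\ref{lem:schur}.
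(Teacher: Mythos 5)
Your proposal is correct and follows essentially the same route as the paper, which simply invokes the theorem in Marshall--Olkin--Arnold (part I, ch.~3, A.8) stating that increasing Schur-convex functions preserve the weak majorization order, together with Lemma~\ref{lem:schur}. The only difference is that you additionally sketch the proof of that cited theorem (via the bridging vector $\boldsymbol{z}$ with $\boldsymbol{x}\leq\boldsymbol{z}\prec\boldsymbol{y}$), which the paper leaves to the reference.
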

\begin{proof}
	The proof follows from a theorem given in Ref.~\cite[part I,ch.3,A.8]{marshall1979inequalities} about weak majorization and lemma \ref{lem:schur}.
\end{proof}
We are now ready to prove the following inequality:
\begin{lemma}\label{cor:triangle}
	Let $\boldsymbol{p}\in \mathcal{D}_+^d$, and let $p_{i,j}$ be defined as in Eq.~\eqref{eq:pij_sup} for the components of $\boldsymbol{p}$.
	Let $A$, $B$, and $C=A+B$ be Hermitian matrices with eigenvalues $a_1\geq\dotsm \geq a_d$, $b_1\geq\dotsm \geq b_d$, and $c_1\geq\dotsm\geq c_d$, respectively. Then,
	\begin{align}
	\sum_{i=1}^d p_{i,d-i+1}(c_i-c_{d-i+1})^2\leq \sum_{i=1}^d p_{i,d-i+1}\left(a_i-a_{d-i+1}+b_i-b_{d-i+1}\right)^2.
	\end{align}
\end{lemma}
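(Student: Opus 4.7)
The strategy is to rewrite both sides as values of the Schur-convex functional $\phi_{\boldsymbol{q}}$ from Definition~\ref{def:f}, evaluated on $\boldsymbol{d}(C)$ and $\boldsymbol{d}(A)+\boldsymbol{d}(B)$, and then to combine Lemma~\ref{lem:Schurconvex} with the weak majorization from Lemma~\ref{lem:weakmajo}. Using the symmetries $p_{i,j}=p_{j,i}$ and $(c_i-c_{d-i+1})^2=(c_{d-i+1}-c_i)^2$, I fold the sum over $i=1,\dotsc,d$ into twice the sum over $i=1,\dotsc,\ceil*{d/2}$; for odd $d$ the central index contributes zero because $p_{(d+1)/2,(d+1)/2}=0$. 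The same manipulation applies to the right-hand side. Setting $\boldsymbol{q}\coloneqq(p_{1,d},p_{2,d-1},\dotsc,p_{\ceil*{d/2},d-\ceil*{d/2}+1})$, both sides become $2\phi_{\boldsymbol{q}}(\boldsymbol{d}(C))$ and $2\phi_{\boldsymbol{q}}(\boldsymbol{d}(A)+\boldsymbol{d}(B))$, where I use that $\boldsymbol{d}(A)+\boldsymbol{d}(B)$ is already in decreasing order (as the termwise sum of two decreasing nonnegative sequences), so that it equals its own decreasing rearrangement.

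The key prerequisite is $\boldsymbol{q}\in\mathcal{D}_+^{\ceil*{d/2}}$, which makes $\phi_{\boldsymbol{q}}$ Schur convex and increasing by Lemma~\ref{lem:schur}. Nonnegativity is immediate from the definition of $p_{i,j}$. Monotonicity $p_{i,d-i+1}\geq p_{i+1,d-i}$ follows from two applications of Lemma~\ref{lem:1}(i):
\begin{equation*}
p_{i,d-i+1}\geq p_{i,i+1}+p_{i+1,d-i+1}\geq p_{i+1,d-i+1}\geq p_{i+1,d-i}+p_{d-i,d-i+1}\geq p_{i+1,d-i},
\end{equation*}
valid on the admissible range of $i$; the single edge case (when $i+1=d-i$ for odd $d$) is trivial because the smaller term vanishes. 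With this in hand, the weak majorization $\boldsymbol{d}(C)\prec_w\boldsymbol{d}(A)+\boldsymbol{d}(B)$ from Lemma~\ref{lem:weakmajo} combined with Lemma~\ref{lem:Schurconvex} yields $\phi_{\boldsymbol{q}}(\boldsymbol{d}(C))\leq\phi_{\boldsymbol{q}}(\boldsymbol{d}(A)+\boldsymbol{d}(B))$, and the lemma follows after multiplying by two.

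The only real content beyond the preceding lemmas is the bookkeeping that identifies the paired weighted sum in the statement with a $\phi_{\boldsymbol{q}}$ evaluation carrying a valid (decreasing, nonnegative) weight vector; all analytic heavy lifting has been done by the Ky~Fan inequalities (packaged in Lemma~\ref{lem:weakmajo}) and the Schur-convexity machinery (Lemmas~\ref{lem:schur} and~\ref{lem:Schurconvex}). I expect no further obstacles.
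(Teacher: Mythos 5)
Your proof is correct and follows essentially the same route as the paper's: establish $(p_{i,d-i+1})_{i=1}^{\ceil*{d/2}}\in\mathcal{D}_+^{\ceil*{d/2}}$ via two applications of Lemma~\ref{lem:1}$(i)$, then combine Lemma~\ref{lem:weakmajo} with the Schur-convexity Lemma~\ref{lem:Schurconvex}, and recover the full sum by the symmetry $p_{i,j}=p_{j,i}$ (the paper adds the mirrored half-sum where you double the half-sum, which is the same bookkeeping). Your explicit remark that $\boldsymbol{d}(A)+\boldsymbol{d}(B)$ is already decreasing, so the rearrangement in $\phi_{\boldsymbol{q}}$ is harmless, is a detail the paper leaves implicit but changes nothing.
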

\begin{proof}
	Let us first show that coefficients $p_{i,d-i+1}$ satisfy 
	\begin{align}
	(p_{i,d-i+1})_{i=1}^{\ceil*{d/2}}\in\mathcal{D}_+^{\ceil*{d/2}}.\label{eq:in_D}
	\end{align}
	For $1\leq i<\floor*{d/2}$, where $\floor*{d/2}$ denotes the largest integer $j$ with $j\leq d/2$, we have
	\begin{align}
	p_{i,d-i+1}\geq p_{i,i+1}+p_{i+1,d-i+1}\geq p_{i,i+1}+p_{i+1,d-i}+p_{d-i,d-i+1},
	\end{align}
	where inequality $(i)$ from lemma \ref{lem:1} was applied twice, and it follows	$p_{i,d-i+1}\geq p_{i+1,d-i}$. For even $d$ it follows Eq.~\eqref{eq:in_D}.
	For odd $d$, we further have $p_{\floor*{d/2},d-\floor*{d/2}+1}\geq p_{\ceil*{d/2},d-\ceil*{d/2}+1}$ because $p_{\ceil*{d/2},d-\ceil*{d/2}+1}=p_{\ceil*{d/2},\ceil*{d/2}}=0$ by definition of $p_{i,j}$. This proves Eq.~\eqref{eq:in_D}.

	Together with lemmata \ref{lem:weakmajo} and \ref{lem:Schurconvex} it follows that
	\begin{align}\label{eq:lowersum}
	\sum_{i=1}^{\ceil*{d/2}} p_{i,d-i+1}(c_i-c_{d-i+1})^2\leq \sum_{i=1}^{\ceil*{d/2}} p_{i,d-i+1}\left(a_i-a_{d-i+1}+b_i-b_{d-i+1}\right)^2,
	\end{align}
	which, due to the symmetries $p_{i,d-i+1}=p_{d-i+1,i}$ and $(c_i-c_j)^2=(c_j-c_i)^2$, is equivalent to
	\begin{align}\label{eq:uppersum}
	\sum_{i=d-\ceil*{d/2}+1}^{d} p_{i,d-i+1}(c_i-c_{d-i+1})^2\leq \sum_{i=d-\ceil*{d/2}+1}^{d} p_{i,d-i+1}\left(a_i-a_{d-i+1}+b_i-b_{d-i+1}\right)^2.
	\end{align}
	Adding inequalities \eqref{eq:lowersum} and \eqref{eq:uppersum} proves the lemma since, in case of odd $d$, $p_{\ceil*{d/2},\ceil*{d/2}}=0$.
\end{proof}
We are now in the position to prove theorem 2 from the Letter:
\begin{theorem}\label{th:time_dep_sup}
	For any state $\rho$ with ordered eigenvalues $p_1\geq \dotsm \geq p_d$ and any time-dependent Hamiltonian $H_\alpha(t)$, where $\mu_1(t)\geq\dotsm\geq \mu_d(t)$ are the ordered eigenvalues of $\partial_\alpha H_\alpha(t)\coloneqq\partial H_\alpha(t)/\partial \alpha$, an upper bound for the QFI is given by
	\begin{align}\label{eq:opt_QFI_th2_sup}
	K_\alpha=\frac{1}{2}\sum_{k=1}^{d}p_{k,d-k+1} \left(\int_{0}^{T}[\mu_k(t)-\mu_{d-k+1}(t)]\emph{d}t\right)^2.
	\end{align}
	Let $\ket{\mu_k(t)}$ be the time-dependent eigenvectors of $\partial_\alpha H_\alpha(t)$, $\partial_\alpha H_\alpha(t)\ket{\mu_k(t)}=\mu_k(t)\ket{\mu_k(t)}$. The upper bound $K_\alpha$ is reached by preparing the initial state
	\begin{equation}\label{eq:opt_state_sup_2}
	\rho^*=\sum_{k=1}^d p_k \ket{\phi_k}\bra{\phi_k},
	\end{equation}
	with
	\begin{equation}\label{eq:opt_eigen_th2_sup}
	\ket{\phi_k}=\begin{cases*}
	\frac{\ket{\mu_k(0)}+e^{\ii\chi_k}\ket{\mu_{d-k+1}(0)}}{\sqrt{2}}\quad ~~\text{if } 2k<d+1, \\
	\mathrlap{\ket{\mu_k(0)}}\hphantom{\frac{\ket{\mu_k(0)}+\ket{\mu_{d-k+1}(0)}}{\sqrt{2}}}\quad ~~~~~~\text{if }2k=d+1,\\
	\frac{\ket{\mu_k(0)}-e^{\ii\chi_k}\ket{\mu_{d-k+1}(0)}}{\sqrt{2}}\quad ~~\text{if } 2k>d+1,
	\end{cases*}
	\end{equation}
	where $\chi_k$ are arbitrary real phases (the theorem as formulated in the Letter is recovered by setting $\chi_k=0$),  and by choosing the Hamiltonian control $H_c(t)$ such that
	\begin{align}
	U_\alpha(t)\ket{\mu_k(0)}=\ket{\mu_k(t)}\quad \forall k=1,\dotsc,d~~\forall t,\label{eq:opt_con_th2_sup}
	\end{align}
	where
	\begin{align}\label{eq:unitarycon_sup}
	U_\alpha(t)= \mc{T}\left[\exp\left(-\ii\int_0^t [H_\alpha(\tau)+H_c(\tau)] \diff \tau\right)\right].
	\end{align}
\end{theorem}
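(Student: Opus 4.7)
The plan is to reduce Theorem \ref{th:time_dep_sup} to Theorem \ref{th:main_sup} via the Pang--Jordan integral representation of the generator, and then to bound the eigenvalue differences of that integral by iterating the weak--majorization inequality packaged as Lemma \ref{cor:triangle}.

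First, for a time--dependent Hamiltonian with control, Ref.~\cite[Eq.~6]{pang_optimal_2017} gives
\begin{equation*}
h_\alpha = \int_0^T U_\alpha^\dagger(t)\,\partial_\alpha H_\alpha(t)\,U_\alpha(t)\,\mathrm{d}t.
\end{equation*}
Applying Theorem \ref{th:main_sup} to $\rho$ and this $h_\alpha$ immediately yields $I_\alpha(\rho)\leq \tfrac{1}{2}\sum_k p_{k,d-k+1}(h_k-h_{d-k+1})^2$, with $h_1\geq\dotsm\geq h_d$ the ordered eigenvalues of $h_\alpha$. I would then approximate the integral by a Riemann sum $\sum_n \Delta t\,A_n$ with $A_n=U_\alpha^\dagger(t_n)\,\partial_\alpha H_\alpha(t_n)\,U_\alpha(t_n)$, noting that each $A_n$ is Hermitian and has the same spectrum $\mu_k(t_n)$ as $\partial_\alpha H_\alpha(t_n)$ because conjugation by a unitary preserves eigenvalues. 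Iterating Lemma \ref{cor:triangle} (which is stated for a two--summand sum) across the $N$ terms and passing to the continuum limit $\Delta t\to 0$ produces
\begin{equation*}
\sum_k p_{k,d-k+1}(h_k-h_{d-k+1})^2 \leq \sum_k p_{k,d-k+1}\Bigl(\int_0^T[\mu_k(t)-\mu_{d-k+1}(t)]\,\mathrm{d}t\Bigr)^2 = 2K_\alpha,
\end{equation*}
which together with the first inequality proves $I_\alpha(\rho)\leq K_\alpha$ for every $\rho$ and every control Hamiltonian $H_c$.

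For saturation, the control condition \eqref{eq:opt_con_th2_sup} gives
\begin{equation*}
U_\alpha^\dagger(t)\,\partial_\alpha H_\alpha(t)\,U_\alpha(t)=\sum_k \mu_k(t)\,|\mu_k(0)\rangle\langle\mu_k(0)|,
\end{equation*}
so all integrands pairwise commute and $h_\alpha$ has eigenvectors $|\mu_k(0)\rangle$ with eigenvalues $\int_0^T\mu_k(t)\,\mathrm{d}t$. The state in Eqs.~\eqref{eq:opt_state_sup_2}--\eqref{eq:opt_eigen_th2_sup} is then exactly the optimal--state prescription of Theorem \ref{th:main_sup} applied to this particular $h_\alpha$; by that theorem it saturates the upper bound, which now coincides with $K_\alpha$.

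The main technical obstacle is the iteration of Lemma \ref{cor:triangle} across a Riemann sum of arbitrary length, together with the passage to the continuum limit. The route is to show by induction that $\boldsymbol{d}\bigl(\sum_n A_n\bigr)\prec_w \sum_n \boldsymbol{d}(A_n)$ (a direct extension of Lemma \ref{lem:weakmajo}), then invoke the Schur convexity of $\phi_{\boldsymbol{p}}$ (Lemmas \ref{lem:schur} and \ref{lem:Schurconvex}) to turn the weak majorization into the quadratic inequality, and finally use continuity of the ordered spectra $\mu_k(t)$ to let $\Delta t\to 0$. Eigenvalue crossings of $\partial_\alpha H_\alpha(t)$ are harmless because Lemma \ref{cor:triangle} depends only on the decreasing rearrangement of the spectrum at each instant.
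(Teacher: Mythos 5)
Your proposal is correct and follows essentially the same route as the paper's proof: the Pang--Jordan integral representation of $h_\alpha$, reduction to Theorem \ref{th:main_sup}, a Riemann-sum discretization with iterated application of Lemma \ref{cor:triangle} (equivalently, the $N$-term extension of Lemma \ref{lem:weakmajo} plus Schur convexity), and saturation via the control condition \eqref{eq:opt_con_th2_sup}. Your saturation step---observing that the control makes all integrands simultaneously diagonal in the $\ket{\mu_k(0)}$ basis so that Theorem \ref{th:main_sup} applies verbatim to the resulting $h_\alpha$---is a slightly cleaner packaging of the paper's direct matrix-element computation, but it is the same argument in substance.
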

\begin{proof}
	From theorem \ref{th:main_sup} we have that
	for any state $\rho$ and any generator $\dgen$ with ordered eigenvalues $p_1\geq \dotsm \geq p_d$ and $h_1\geq\dotsm\geq h_d$, respectively, the maximal QFI with respect to all unitary state preparations $U\rho U^\dagger$, $U\in\text{U}(d)$, is given by
	\begin{align}
	I_\alpha^*\coloneqq\max_{U} \qfinkd_\param (U\rho U^\dagger)=\frac{1}{2}\sum_{j=1}^{d}\pij{j}{d-j+1}(h_j-h_{d-j+1})^2.
	\end{align}
	Further, the generator can be written as \cite[Eq.~6]{pang_optimal_2017}	
	\begin{align}\label{eq:generator_sup}
	h_\alpha=\int_{0}^{T}U_\alpha^\dagger(t)\partial_\alpha H_\alpha(t)U_\alpha(t) \diff t,
	\end{align}
	Writing the integral as an infinite sum,
	\begin{align}
	h_\alpha=\lim\limits_{n\rightarrow \infty}\sum_{l=0}^{n}U_\alpha^\dagger(lT/n)\partial_\alpha H_\alpha(lT/n)U_\alpha(lT/n)T/n,\label{eq:generator_sum_sup}
	\end{align}
	repeated application of lemma \ref{cor:triangle} to bipartitions of the sum yields in the limit of infinite many applications of lemma \ref{cor:triangle}
	\begin{align}\label{eq:optQFI_sup}
	I^*_\alpha=\frac{1}{2}\sum_{j=1}^{d}\pij{j}{d-j+1}(h_j-h_{d-j+1})^2\leq \frac{1}{2}\sum_{j=1}^{d}\pij{j}{d-j+1}\left(\int_0^T[\mu_j(t)-\mu_{d-j+1}(t)]\diff t\right)^2=K_\alpha .
	\end{align}
	It remains to show that Eq.~\eqref{eq:opt_QFI_th2_sup} can be saturated. In order to show this it suffices to calculate the QFI for the initial state as defined in Eqs.\,\eqref{eq:opt_state_sup_2} and \eqref{eq:opt_eigen_th2_sup} and a generator as given in Eq.~\eqref{eq:generator_sum_sup} with the unitary transformation fulfilling Eq.~\eqref{eq:unitarycon_sup}:
	\begin{align}
	I_\param (\rho^*)&=2\sum_{i,j=1}^d \pij{i}{j} \absv{\bra{\phi_i}\dgen\ket{\phi_j}}^2
	\end{align}
	where $\ket{\phi_j}$ are defined in Eq.~\eqref{eq:opt_eigen_th2_sup}. More explicitly,  in 
	\begin{align}
	\bra{\phi_i}\dgen\ket{\phi_j}&=\lim\limits_{n\rightarrow \infty}\sum_{l=0}^{n}\bra{\phi_i}U_\alpha^\dagger(lT/n)\partial_\alpha H_\alpha(lT/n)U_\alpha(lT/n)\ket{\phi_j}T/n\label{eq:matrixcoef}
	\end{align}
	we use the definition of $\ket{\phi_j}$ and Eq.~\eqref{eq:opt_con_th2_sup} which gives, due to
	\begin{align}
	\bra{\mu_i(lT/n)}\partial_\alpha H_\alpha(lT/n)\ket{\mu_j(lT/n)}=\delta_{i,j}\mu_i(lT/n),
	\end{align}
	the following expression for the matrix coefficients in Eq.~\eqref{eq:matrixcoef}:
	\begin{multline}
	\bra{\phi_i}U_\alpha^\dagger(lT/n)\partial_\alpha H_\alpha(lT/n)U_\alpha(lT/n)\ket{\phi_j}\\=\delta_{i,j}\frac{\mu_i(lT/n)+\mu_{d-i+1}(lT/n)}{2} + \delta_{i,d-j+1}\frac{|\mu_i(lT/n)-\mu_{d-i+1}(lT/n)|}{2}.
	\end{multline}
	Due to $p_{i,i}=0$ one obtains
	\begin{align}
	I_\param (\rho^*)&=2\sum_{j=1}^d \pij{j}{d-j+1}\left|\lim\limits_{n\rightarrow \infty}\sum_{l=0}^{n}\frac{\mu_j(lT/n)-\mu_{d-j+1}(lT/n)}{2}T/n\right|^2\\
	&=\frac{1}{2}\sum_{j=1}^{d}p_{j,d-j+1} \left(\int_{0}^{T}[\mu_j(t)-\mu_{d-j+1}(t)]\diff t\right)^2=K_\alpha.
	\end{align}
\end{proof}
\section{Proof of Heisenberg scaling for thermal states}
In this section we will prove that if a product of $N$ thermal spin-$j$ states (at arbitrary finite temperature) is available and sensor dynamics is unitary, one can reach Heisenberg scaling of the QFI $I_\alpha$ for unitary dynamics in $N$ and $j$ by preparing the optimal initial state according theorem 1 in the Letter (or theorem 2, in case of Hamiltonian control). Heisenberg scaling in $N$ and $j$ means $I_\alpha\propto N^2$ for any $j=\frac{1}{2},1,\frac{3}{2},\dotsc$ and $I_\alpha\propto j^2$ for any $N=1,2,3,\dotsc$.

According to the pinching theorem (also known as squeeze theorem) a function scales with $N^2$ ($j^2$) if there are upper and lower bounds scaling as $N^2$ ($j^2$). Clearly, the QFI of a product of $N$ thermal spin-$j$ states is upper bounded by the pure-state case obtained in the limiting case of zero temperature. For pure states, it is well known that the QFI, optimized over unitary state preparations, scales as $N^2$ ($j^2$). We will find lower bounds for the QFI of a product of $N$ thermal spin-$j$ states that scale as $N^2$ ($j^2$).

Let the QFI be given by (compared to Eq.~(14) in the Letter, we set $g(T)=1$ because we are only interested in the scaling with $N$ and $j$ in the following)
\begin{align}
K_B=\frac{4}{Z_\beta^N}\sum_{k=-Nj}^{Nj}q(k)\frac{\sinh^2(\beta k)}{\cosh(\beta k)}k^2,\label{eq:qfi_opt_spins_sup}
\end{align}
with $q(k)$
the number of possibilities of getting a sum $k$ when rolling
$N$ fair dice,
each having $2j+1$ sides corresponding to values $\{-j,\dotsc, j\}$, and with $Z_\beta$ the partition function
\begin{align}
Z_\beta&=\sum_{m=-j}^{j}\text{e}^{\beta m}=\cosh(\beta j) + \frac{\sinh(\beta j)}{\tanh(\beta/2)},\label{eq:partfunc}
\end{align}
which was rewritten (for $\beta>0$) making use of the geometric series.
First, we find a lower bound $L_B$ for $K_B$:
\begin{align}
K_B=\frac{4}{Z_\beta^N}\sum_{k=-Nj}^{Nj}q(k)\frac{\sinh^2(\beta k)}{\cosh(\beta k)}k^2&=\frac{4}{Z_\beta^N}\sum_{k=-Nj}^{Nj}q(k)\left(\cosh(\beta k)-\frac{1}{\cosh(\beta k)}\right)k^2\label{eq:qfi_two_terms}\\
&\geq \frac{4}{Z_\beta^N}\sum_{k=-Nj}^{Nj}q(k)\left(\cosh(\beta k)-1\right)k^2\eqqcolon L_B,
\end{align}
where we used that each summand is nonnegative and 
\begin{align}
\frac{\sinh^2(\beta k)}{\cosh(\beta k)}&=\frac{\cosh^2(\beta k)-1}{\cosh(\beta k)}=\cosh(\beta k)-\frac{1}{\cosh(\beta k)}
\geq\cosh(\beta k)-1,
\end{align}
which follows from the trigonometric identity $\sinh^2(x)-\cosh^2(x)=1$ and $\cosh(x)\geq 1$. Next, we rewrite $L_B$ as
\begin{align}
L_B=\frac{4}{Z_\beta^N}\sum_{k=-Nj}^{Nj}q(k)\left(\ee^{\beta k}-1\right)k^2,\label{eq:lb1}
\end{align}
where we used that $\cosh(x)=(\ee^x+\ee^{-x})/2$ and $\sum_{k=-Nj}^{Nj}q(k)\ee^{\beta k}k^2=\sum_{k=-Nj}^{Nj}q(k)\ee^{-\beta k}k^2$ because $q(k)k^2$ is symmetric around $k=0$.

Then, we make use of the  generating function of $q(k)$ \cite{uspensky1937introduction}:
\begin{align}
\left(x^{-j}+x^{-j+1}+\dotsm + x^{j}\right)^N=\sum_{k=-Nj}^{Nj}q(k)x^k.
\end{align}
By setting $x=\ee^\beta$, we find $Z_\beta^N=\sum_{k=-Nj}^{Nj}q(k)\ee^{\beta k}$. Taking the second derivative with respect to $\beta$ yields
\begin{align}
\frac{\partial Z_\beta^N}{\partial\beta^2}=\sum_{k=-Nj}^{Nj}q(k)\ee^{\beta k}k^2.
\end{align}
With this, we rewrite $L_B$ as
\begin{align}
L_B=\left(\frac{\partial^2 Z_\beta^N}{\partial\beta^2}-\left.\frac{\partial^2 Z_\beta^N}{\partial\beta^2}\right|_{\beta=0}\right)\frac{4}{Z_\beta^N},\label{eq:lb2}
\end{align}
where the second term is evaluated for $\beta=0$ and corresponds the negative part of $L_B$ in Eq.~\eqref{eq:lb1}.
Since $Z_\beta\geq Z_{\beta=0}$, we find the lower bound
\begin{align}
-\left.\frac{\partial^2 Z_\beta^N}{\partial\beta^2}\right|_{\beta=0}\frac{4}{Z_\beta^N}\geq -\left.\frac{\partial^2 Z_\beta^N}{\partial\beta^2}\right|_{\beta=0}\frac{4}{Z_0^N},
\end{align}
which is readily evaluated:
\begin{align}
-\left.\frac{\partial^2 Z_\beta^N}{\partial\beta^2}\right|_{\beta=0}\frac{4}{Z_0^N}=-4\left[N(N-1)\left(\frac{\partial Z_\beta/\partial\beta|_{\beta=0}}{Z_0}\right)^2+N\frac{\partial^2 Z_\beta/\partial\beta^2|_{\beta=0}}{Z_0}\right],
\end{align}
and with $\partial Z_\beta/\partial\beta|_{\beta=0}=\sum_{m=-j}^{j}m=0$ we find
\begin{align}
-\left.\frac{\partial^2 Z_\beta^N}{\partial\beta^2}\right|_{\beta=0}\frac{4}{Z_0^N}=-4N\sum_{m=-j}^{j}m^2/(2j+1)=-\frac{4}{3}Nj(j+1),\label{eq:betazero}
\end{align}	
again using the geometric series. Together with
\begin{align}
\frac{\partial^2 Z_\beta^N}{\partial\beta^2}\frac{4}{Z_\beta^N}=4N(N-1)\left(\frac{\partial Z_\beta/\partial\beta}{Z_\beta}\right)^2+4N\frac{\partial^2Z_\beta/\partial\beta^2}{Z_\beta},
\end{align}
this brings us to another lower bound:  
\begin{align}
L_B\geq M_B\coloneqq 4\left[N(N-1)\left(\frac{\partial Z_\beta/\partial\beta}{Z_\beta}\right)^2+N\frac{\partial^2Z_\beta/\partial\beta^2}{Z_\beta}-\frac{1}{3}Nj(j+1)\right]
\end{align}
Neglecting terms proportional to $N$ we find after trivial algebraic transformations
\begin{align}
M_B&\propto 4N^2\left(\frac{\partial Z_\beta/\partial\beta}{Z_\beta}\right)^2\label{eq:Mb}\\
&=N^2\frac{\left\{(1 + j) \sinh(\beta j) - j \sinh[\beta (1 + j)]\right\}^2}{\sinh^2\left(\frac{\beta}{2}\right)\sinh^2\left[\beta(j+\frac{1}{2})\right]},\label{eq:n2scaling}
\end{align}
which is clearly nonnegative for finite temperatures ($\beta>0$). In order to become zero,
\begin{align}
\frac{\sinh[\beta j]}{\beta j}=\frac{\sinh[\beta(j+1)]}{\beta(j+1)}\label{eq:contr}
\end{align}
would have to be fulfilled. However, since $\frac{\sinh(x)}{x}=1+\frac{x^2}{3!}+\frac{x^4}{5!}+\dotsm$ is an increasing function on $x\in[0,\infty)$, Eq.~\eqref{eq:contr} leads to a contradiction for $\beta>0$. Therefore, the expression in Eq.~\eqref{eq:n2scaling} is positive for finite temperatures which proves the $N^2$ scaling for all $j=\frac{1}{2},1,\frac{3}{2},\dotsc$.

There are two remarks in order:
\begin{enumerate}[(i)]
	\item In summary the lower bound $M_B$ was obtained from QFI $K_B$ by finding a lower bound for the negative term of the QFI in Eq.~\eqref{eq:qfi_two_terms},
	\begin{align}
	-\frac{4}{Z_\beta^N}\sum_{k=-Nj}^{Nj}q(k)\frac{k^2}{\cosh(\beta k)}\geq -\frac{4}{3}Nj(j+1).
	\end{align}
	Since the right-hand side scales linearly in $N$, the left-hand side scales at most linearly in $N$ and, in particular, cannot scale quadratically. Thus, the $N^2$-scaling of the QFI solely comes from the positive term in Eq.~\eqref{eq:qfi_two_terms}. Therefore, in leading order of $N$ we find for the QFI exactly Eq.~\eqref{eq:Mb}, i.e.,
	\begin{align}
	K_B=4N^2\left(\frac{\partial Z_\beta/\partial\beta}{Z_\beta}\right)^2+\mathcal{O}(N)=4N^2\left(\frac{\partial \ln Z_\beta}{\partial \beta}\right)^2+\mathcal{O}(N),
	\end{align}
	where $\mathcal{O}(N)$ denotes terms $\propto N$ and lower-order terms.
	With the operator $S_z$ of a spin $j$ in $z$-direction and corresponding thermal state $\rho_\text{th}=\text{e}^{\beta S_z}/Z_\beta$, we rewrite $Z_\beta=\trb{\text{e}^{\beta S_z}}$ and
	\begin{align}
	4N^2\left(\frac{\partial \ln Z_\beta}{\partial \beta}\right)^2=4N^2\braket{S_z}^2,\label{eq:nsquared}
	\end{align}
	where $\braket{S_z}=\trb{\rho_\text{th}S_z}$.
	\item Let us identify $\beta=1/\chi$ with a temperature $\chi$. A Taylor expansion of the prefactor $Q(\beta)\coloneqq 4\left(\frac{\partial \ln Z_\beta}{\partial \beta}\right)^2$ in Eq.~\eqref{eq:nsquared} around $\beta=0$ yields
	\begin{align}
	Q(\beta)=\frac{4}{9}\left[j(j+1)\right]^2\beta^2+\mathcal{O}(\beta^4),\label{eq:taylor}
	\end{align}
	where $\mathcal{O}(\beta^4)$ denotes terms $\propto\beta^4$ and higher-order terms which can be neglected for small $\beta$. This shows that for small $\beta$, i.e., for large temperatures $\chi$, the prefactor decays quadratically, $\propto \chi^{-2}$. Also, in this regime of high temperatures, the QFI scales with $j^4$. However, if the product $j\beta$ is of order one (or larger), we are no longer in the range of validity of the second-order Taylor expansion in Eq.~\eqref{eq:taylor}. As we will see in the next section, the QFI scales $\propto j^2$ in the limit of large $j$.
\end{enumerate}
In order to prove $j^2$ scaling, we first consider
\begin{align}
M_B(N)-NM_B(N=1)=N(N-1)\left(\frac{\partial Z_\beta/\partial\beta}{Z_\beta}\right)^2\geq 0
\end{align}
for any $N\geq 1$. It follows that $M_B(N)\geq NM_B(N=1)$. We find after some simple algebraic transformations using the partition function as given in Eq.~\eqref{eq:partfunc},
\begin{align}
N M_B(1)&=4N\left[\frac{\partial^2Z_\beta/\partial\beta^2}{Z_\beta}-\frac{j(j+1)}{3}\right]\\
&=N\left[j^2\frac{8}{3}-j\frac{10 \cosh(\beta j) + 2\cosh[\beta (j+1)]}{3\sinh(\beta/2)\sinh[\beta(j+1/2)]}+\frac{\sinh(\beta)\sinh(\beta j)}{\sinh^3(\beta/2)\sinh[\beta(j+1/2)]}\right],
\end{align}
which is well defined for finite temperatures, and the third summand as well as the prefactor of $j$ in the second summand clearly converge to a constant in the limit of large $j$. This proves the $j^2$ scaling for finite temperatures for any $N=1,2,\dotsc$.

\end{document}